\documentclass[letterpaper,journal]{IEEEtran}
\usepackage{amsmath,amsfonts}
\usepackage{amssymb}
\usepackage{algorithmic}
\usepackage{algorithm}
\usepackage{array}
\usepackage{textcomp}
\usepackage{stfloats}
\usepackage{url}
\usepackage{verbatim}
\usepackage{graphicx}
\usepackage{cite}
\usepackage{tabularx}
\usepackage{siunitx}
\usepackage{caption}
\usepackage{hyperref}
\usepackage{xcolor}
\usepackage{subcaption}
\newtheorem{thm}{Theorem}

\begin{document}

\title{A Covert Precision Satellite Communication Framework
Assisted by Cooperative IRSs}

\author{Haoyang~Wu,~\IEEEmembership{Graduate Student Member,~IEEE}, Yunfan~Bai,~\IEEEmembership{Graduate Student Member,~IEEE}, Mei~Shen, Yuwen~Qian, Guangji~Chen, Long Shi,~\IEEEmembership{Senior Member,~IEEE}, Feng~Shu,~\IEEEmembership{Senior Member,~IEEE}, and Jun~Li,~\IEEEmembership{Fellow,~IEEE}

\thanks{Haoyang~Wu, Yunfan~Bai, Mei~Shen, Yuwen~Qian, Guangji~Chen and Long~Shi are with the School of Electronic and Optical Engineering,
Nanjing University of Science and Technology, Nanjing 210094, China (e-mail:
\{endlessing,baiyunfan,shenmei,admon\}@njust.edu.cn,
\protect\nolinkurl{guangjichen@um.edu.mo},
\protect\nolinkurl{slong1007@gmail.com}).}

\thanks{Feng~Shu is with the School of Information and Communication Engineering and Collaborative Innovation Center of Information Technology, Hainan University, Haikou 570228, China, and also with the School of Electronic and Optical Engineering, Nanjing University of Science and Technology, Nanjing 210094, China. (e-mail: shufeng0101@163.com).}

\thanks{Jun~Li is with the School of Information Science and Engineering, Southeast University, Nanjing 210096, China. (e-mail: jun.li@seu.edu.cn).}

}

\maketitle

\begin{abstract}
Satellite communication (SatCom), as an effective complement to terrestrial networks, has attracted considerable attention from both academia and industry owing to its wide coverage and high flexibility.
However, the inherent openness of satellite links renders them highly vulnerable to eavesdropping, thereby posing significant security challenges.
In this paper, we propose a satellite covert precision wireless communication (CPWC) system, where multiple intelligent reflecting surfaces (IRSs) cooperate to assist satellite transmissions, ensuring that confidential information is delivered to legitimate users while remaining undetectable to wardens. 
To further enhance covertness, an orthogonal frequency division multiplexing (OFDM)-based random subcarrier selection (RSCS) method is developed to concentrate the signal energy at the intended receiver.
Under a practical satellite--terrestrial channel model, we derive closed-form covertness constraints for the CPWC system based on relative entropy and detection error probability.
Under the relative-entropy constraint and the satellite power constraint, we maximize the covert rate by an alternating-optimization (AO) based semidefinite relaxation (SDR)  iterative algorithm and obtain a high-quality feasible solution.
Using this solution as a warm start, we further impose the detection-error-probability constraint and refine the beamformer through a sequential quadratic programming (SQP) based algorithm.
Numerical results demonstrate the effectiveness of the proposed CPWC system, where the detection-error-probability-based scheme outperforms the second-order cone programming (SOCP) benchmark, the random-phase-shift design, the one-bit IRS quantized scheme, and the SDR baseline without precise communication (PC) in terms of covert rate.
\end{abstract}

\begin{IEEEkeywords}
Covert communication, intelligent reflecting surface, satellite communication, orthogonal frequency division multiplexing.
\end{IEEEkeywords}

\section{Introduction}
\begingroup
\rightskip=0pt plus 1em
\IEEEPARstart{W}{ith} the increasing communication demands and con\-tinuous advances in communication technologies, satellite communication (SatCom) has emerged as an indispensable complement to terrestrial networks, owing to the advantages of wide coverage and high flexibility~\cite{zhu2021integrated}.
Particularly, low Earth Orbit (LEO) satellites, operating at altitudes below 1500~km, have gained significant attention in recent years due to their benefits of reduced propagation delay, low deployment cost, and mitigated signal attenuation~\cite{8836603}.
\par
\endgroup

LEO satellite systems are now widely deployed for environmental monitoring, maritime communications, emergency response, and broadband access in remote regions~\cite{9832117}, yet the broadcast and open nature of wireless channels makes their links highly vulnerable to eavesdropping and raises stringent security requirements~\cite{8850067}.

Typical approaches to ensure SatCom security include message encryption~\cite{1561945} and physical-layer security (PLS)~\cite{8850067,zhang2023survey}.
Generally,  encryption methods rely heavily on secret keys to protect confidential messages, which inevitably brings additional communication overhead.
Consequently, many SatCom systems are expected to achieve a critical tradeoff between security and latency~\cite{qi2019secure,Wang2024JCINQoS}.
PLS exploits the intrinsic characteristics of the wireless channel to safeguard communications and thus is regarded as a promising complement to the upper-layer cryptographic methods~\cite{6739367}.
To enhance PLS in satellite networks, various strategies have been proposed.
For instance, a beamforming scheme combined with terrestrial interference was applied in a cognitive satellite--terrestrial network~\cite{8353865}.
Different from~\cite{8353865}, a non-cooperative beamforming scheme was investigated, where the secrecy rate is maximized via adaptive beamforming, artificial noise, and maximum ratio transmission at terrestrial base stations~\cite{8333695}.
More recently, UAV-enabled channel reconfiguration has been investigated by combining rotatable antennas with UAV platforms, where antenna rotation and UAV deployment are jointly exploited to reshape wireless propagation for integrated sensing and communication~\cite{chen2026rotatable}.

Nevertheless, the aforementioned security schemes can merely guarantee that the encrypted information cannot be accessed, but these schemes cannot ensure the undetectability of the wireless transmissions.
To address this limitation, covert communication has recently attracted significant attention as a means to enhance communication security~\cite{Li2025IntelligentCovertCommunication}.
Initially, a multi-carrier covert SatCom architecture based on electromagnetic environment sensing was proposed in~\cite{8946585} for a military robot swarm, which uses polar-code-aided multi-stream synchronization and diversity combining to approach ideal synchronization.
Beyond military applications, covert SatCom is also important in civilian scenarios.
Covert communication schemes in satellite downlinks were investigated in~\cite{xu2022covert,wu2022covert}, where massive multiple-input multiple-output (MIMO) hybrid beamforming and relay-assisted transmission with partial relay selection were exploited for millimeter-wave covert transmission.

Although the aforementioned strategies can effectively enhance covert performance, the applicability is often limited by the randomness of wireless propagation, thereby limiting robustness under diverse channel conditions.
In covert SatCom, these challenges are further exacerbated by severe fading, shadowed regions without direct communication links, and long transmission distances, which yield low signal-to-noise ratios (SNRs) at receivers.
To address these issues, intelligent reflecting surfaces (IRSs) have emerged as a promising technology for SatCom.
An IRS is composed of numerous low-cost passive reflecting elements that can dynamically manipulate incident signals in terms of phase, amplitude, frequency, and polarization~\cite{11007277}.
By intelligently reconfiguring the propagation environment, IRSs can significantly improve communication quality~\cite{Chen2023ActiveIRS}.
Furthermore, IRSs can be deployed on walls, building facades, ceilings, and other infrastructural surfaces, a property that facilitates large-scale deployment.
For instance, a two-sided IRS-assisted LEO SatCom system was investigated, where IRSs are deployed at both the satellite and ground nodes to facilitate the joint design of active and passive beamforming for enhanced channel gain~\cite{9849035}.
In~\cite{10483088}, deep reinforcement learning was used to optimize a hybrid IRS in a satellite downlink and maximize the worst-case secrecy rate. Moreover, an IRS-enhanced satellite covert communication system was investigated in~\cite{10035943}, where joint optimization of transmit precoding and IRS reflections significantly improved covert performance.

Although the deployment of IRS enhances communication quality for legitimate users, the same configuration can inadvertently increase the detection probability of a warden located in the same angular direction and thus cause covert communication failures. 

In existing studies, precise communication (PC) has been implemented by using frequency diverse arrays, directional modulation, and beamforming designs~\cite{Shu2021EnhancedSecrecyIRS,Zou2025DirectionalModulationRIS}.
By extending conventional beamforming from angle selectivity to joint angle-range selectivity, PC enables signal energy to be concentrated on a prescribed spatial point rather than an entire angular beam.
Therefore, PC is employed to overcome the limitation of angle-domain beamforming when different spatial points have the same angular direction.
In the proposed CPWC framework, PC reshapes the spatial energy distribution before the warden performs detection: the intended signal is focused around the legitimate user, while the signal leakage toward the warden is reduced.

For example, a random frequency diverse array-based directional modulation (DM) with an artificial noise scheme was developed to achieve secure transmission in both angle and range domains~\cite{7817778}. Furthermore, a secure and precise wireless transmission framework was proposed in~\cite{8333706}, which jointly employs artificial-noise projection, phase-aligned beamforming, orthogonal frequency division multiplexing (OFDM)-based random subcarrier selection (RSCS), and DM~\cite{11030570}.

From the perspective of spatial energy distribution, PC and covert communication are coupled through the signal leakage observed at the warden. 
In covert communication systems, the warden-side received signal leakage under the transmission hypothesis is required to remain close to that under the non-transmission hypothesis, whereas PC reduces signal leakage by concentrating the signal energy around the legitimate user.
When the warden lies in the same angular direction as the legitimate user,
covert communication assisted by single-IRS may enhance the signal power at both the user and the warden.
In the proposed CPWC framework, the cooperative IRSs provide additional phase-control degrees of freedom, allowing the reflected signal components to be constructively combined at the legitimate user while being non-coherently combined at the warden.
Therefore, with the assistance of multiple IRSs, PC provides location-selective spatial control before the warden performs statistical detection~\cite{shen2019two}.
In parallel, covert communication quantifies whether the residual warden-side observation remains close to the non-transmission hypothesis.

To the best of our knowledge, the concept of PC has not yet been explored in the context of covert communications, which leaves the design of PC mechanisms for covert satellite networks largely open. Motivated by this gap, this paper has investigated an OFDM-based RSCS covert communication architecture assisted by distributed IRSs in a satellite covert precision wireless communication (CPWC) system. 
Although a single large IRS can provide a substantial array gain under symmetric channel conditions~\cite{9427474}, distributed IRSs offer improved coverage, robustness to blockages, and finer spatial control in complex terrestrial environments with geographically dispersed users.

In particular, the contributions of this paper are outlined as follows.

\begin{itemize}
\item We propose a novel distributed multi-IRS-aided satellite CPWC system, 
thereby enabling location-selective spatial focusing around the legitimate user while suppressing signal leakage toward the warden, especially when the warden lies in the same angular direction as the legitimate user.

\item A practical satellite-terrestrial channel model is established, where the satellite–IRS and IRS–user links are modeled by Nakagami-m and Rician fading channels, respectively.
With the established channel model,the closed-form expression of the covertness threshold has been derived, which is used to evaluate the detection performance of the warden.

\item For a three-dimensional deployment scenario, closed-form expressions of the channel phase are derived for the satellite-IRS and IRS-user links by employing the OFDM-based random subcarrier selection.

\item 
To maximize the covert rate, an alternating-optimization (AO) based semidefinite relaxation (SDR) algorithm is designed under the relative-entropy constraint to obtain a high-quality warm start. Subsequently, a sequential quadratic programming (SQP) algorithm is designed, which is initialized by the warm start to further improve the covert rate.
\end{itemize}

The remainder of this paper is organized as follows. Section II introduces the distributed multi-IRS-aided satellite CPWC system model. Section III formulates a relative-entropy-con\-strained design to obtain a high-quality feasible solution and develops an AO-based SDR iterative algorithm that serves as a warm start. Building on this initialization, Section IV adopts the detection error probability as the covertness metric and proposes an SQP-based refinement algorithm to produce the final beamforming design. Section V presents numerical results to verify the effectiveness of the proposed scheme. Finally, Section VI concludes the paper.

\textit{Notations:} Bold lowercase and uppercase letters denote vectors and matrices, respectively.
$\mathbb{C}^{x\times y}$ represents an $x\times y$ complex-valued matrix.
$j$ denotes the imaginary unit.
For a complex-valued vector $\mathbf{x}$, $\left \| \mathbf{x}\right \|$ denotes its Euclidean norm, and $\mathrm{diag}(\mathbf{x})$ is the diagonal matrix with each element of $\mathbf{x}$ as a diagonal entry.
For a square matrix $\mathbf{X}$,~$\text{tr}(X)$ denotes its trace, and $\mathbf{X}\succeq \mathbf{0}$ means that the matrix is positive semidefinite.
$~\text{rank}[\cdot ]$, $[\cdot ]^T$, and $[\cdot ]^{H}$ represent the rank, transpose, and conjugate transpose operations, respectively.
$\Re[\cdot ]$ and $\Im[\cdot ]$ denote the real and imaginary parts, respectively.
$\odot$ denotes the piecewise product of two vectors of the same dimension. $\mathcal{O}(\cdot)$ denotes the asymptotic upper bound on the order of computational complexity. $\mathbf{x}\sim\mathbb{C}\mathbb{N}(\boldsymbol{\mu}, \boldsymbol{\Sigma})$ denotes that the complex random vector $\mathbf{x}$ follows a circularly symmetric complex Gaussian distribution with mean vector $\boldsymbol{\mu}$ and covariance matrix $\boldsymbol{\Sigma}$.


\section{System Model}

As shown in Fig.~\ref{system}, we propose a distributed multi-IRS-aided satellite CPWC system, in which the satellite--IRS link is modeled as a Nakagami-$m$ fading channel to represent line-of-sight (LoS) propagation in free space, whereas the IRS--user link is modeled as a Rician fading channel to capture a dominant LoS path superimposed with terrestrial scattering.

In the considered system, a LEO satellite serves as the transmitter and, with the aid of $L$ distributed IRSs, denoted as $I_{1}$, $I_{2}$, \ldots, $I_{l}$, \ldots, $I_{L}$, aims to deliver covert information to an intended user referred to as Bob.
Meanwhile, a warden, denoted by Willie, attempts to detect the presence of the covert communication between the satellite and Bob.
To enable real-time control of the IRSs, a feeder link is established between the gateway and Bob. Bob solely forwards control signals and channel state information (CSI) to the gateway, while all computationally intensive tasks are executed at the gateway. Using the received control information, the gateway configures the phase-shift responses of the IRS elements through dedicated control links.

In the proposed CPWC system, we design a terrestrial-IRS-assisted satellite communication architecture, where distributed IRSs are deployed at fixed terrestrial locations to assist satellite downlink transmission and enhance ground-user coverage~\cite{tang2021wireless}.
Note that the satellite-side IRS architecture is not considered in the model, since space-borne reflecting surfaces introduce stricter payload, operation, calibration, and in-orbit maintenance constraints~\cite{zheng2022intelligent}.
By contrast, ground IRSs can reuse terrestrial infrastructure and be placed near blocked user regions, which provides a practical deployment solution for the CPWC mechanism.

However, the terrestrial-IRS-aided architecture still faces practical challenges, including long propagation distances, significant path loss, and LEO mobility.
Therefore, as a tractable benchmark, the proposed system assumes that the satellite-IRS-user geometry is known and that the channel-aware IRS reconfiguration can be updated within one optimization interval~\cite{Gomez2026RejuvenatingIRS}.

Following the assumptions in~\cite{8930608}, the power of signals undergoing two or more IRS reflections is regarded as negligible owing to severe path loss, and only signals reflected once by the IRSs are retained in the subsequent analysis.

To support the assumptions under the considered system, numerical evaluation shows that the received power contributed by higher-order IRS reflections remains tens of dB below that of the first-order reflected component.
The received power deviation caused by excluding higher-order reflections is only at the level of $10^{-4}~\mathrm{dB}$.
Therefore, higher-order IRS reflections have negligible influence on the total received power under the considered system, while the single-reflection approximation keeps the channel model and covertness analysis tractable.

Additionally, we investigate a practical scenario where the direct links between the satellite and users are inaccessible owing to the heavy shadowing~\cite{huang2018secrecy}.

\begin{figure}[htbp]
\centering
\includegraphics[width=0.35\textwidth]{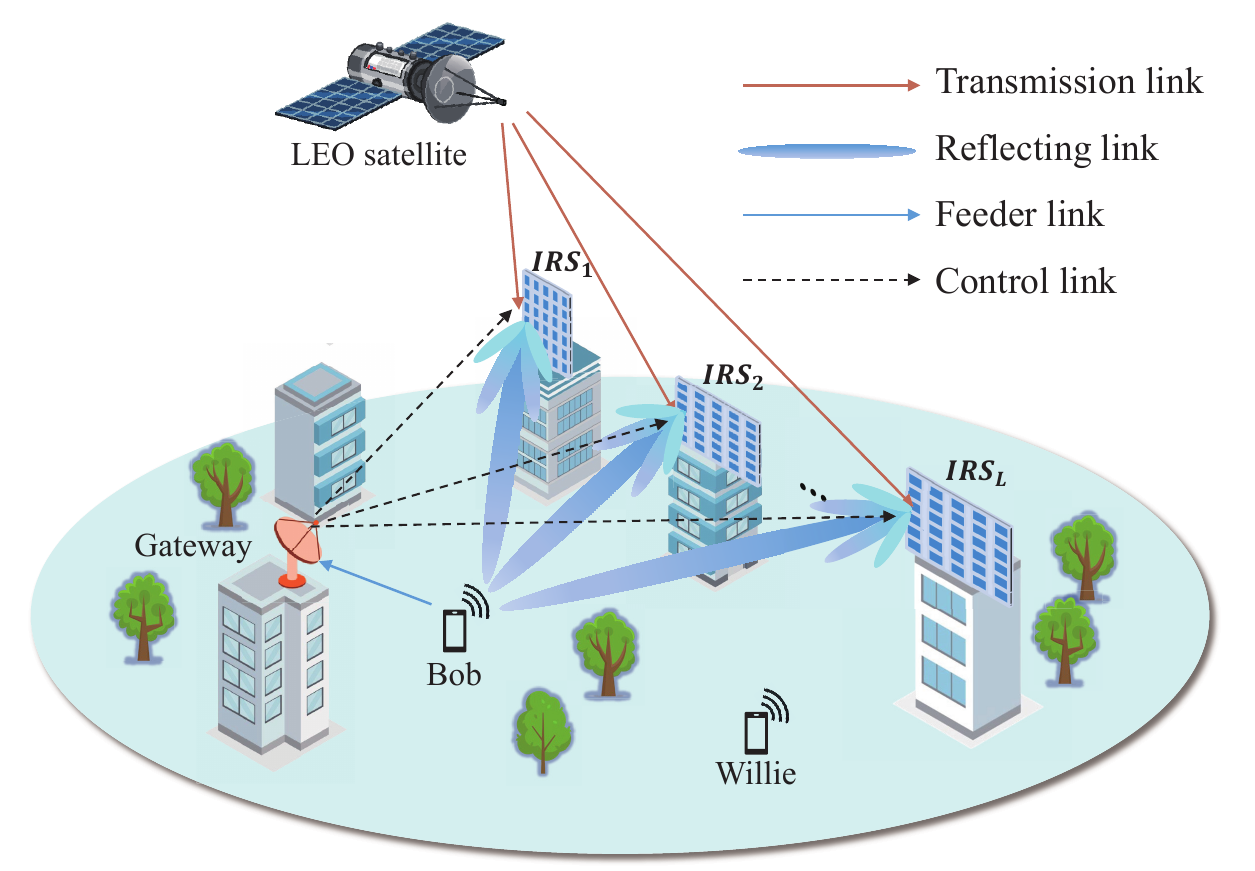}
\caption{Proposed distributed multi-IRS-aided satellite CPWC system, where an LEO satellite, a legitimate user Bob, a warden Willie, $L$ distributed IRSs, and a gateway are included, acting as the covert transmitter, intended receiver, detector, passive reflecting elements, and CSI management node, respectively.}
\label{system}
\end{figure}

As shown in Fig.~\ref{system_3d}, we adopt a three-dimensional (3D) Cartesian coordinate system to specify the positions of all nodes.
Specifically, the coordinates of the satellite, the $l$-th IRS, and each user (Bob or Willie) are denoted by  $(x_s,y_s,z_s)$, $(x_{I_l},y_{I_l},z_{I_l})$, and $(x_u,y_u,z_u),~u\in\left \{b, w \right \}$, respectively.
\begin{figure}[htbp]
\centering
\includegraphics[width=0.35\textwidth]{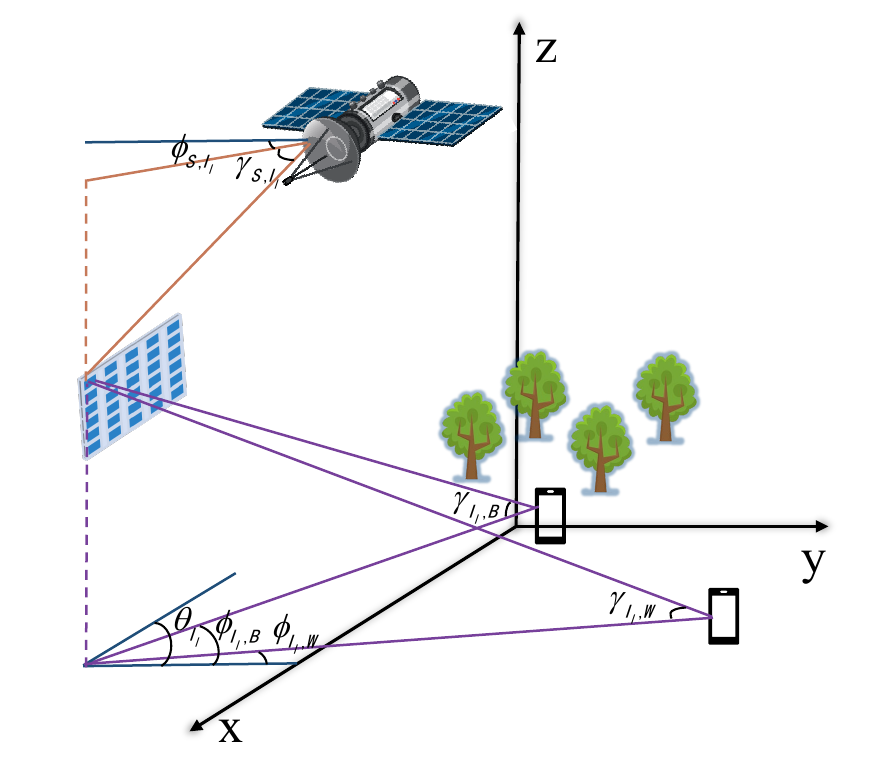}
\caption{Three-dimensional geometric illustration of the satellite–IRS–user links in the proposed CPWC system, taking a single IRS as an example.}
\label{system_3d}
\end{figure}
\subsection{Satellite-IRS Link Channel Model}
To satisfy the stringent precision requirements of SatCom, we employ an RSCS for OFDM in the satellite system.
The set of OFDM subcarriers can be represented as
\begin{align}
\mathcal{A}=\left \{f_ r\mid f_r=f_c+r\Delta f, r=0,1,\cdots,N_s-1\right \},
\end{align}
where $f_c$ is the reference frequency, $\Delta f$ is the subchannel bandwidth, and $B=N_s\Delta f\ll f_c$ ensures that the overall bandwidth remains much smaller than the carrier frequency.
We assume the satellite is equipped with $N$ antennas.
From the set $\mathcal{A}$, we randomly select $N-1$ subcarriers, denoted as $f_n (n=2,3, \cdots, N)$, which serve as the subcarrier frequencies for the satellite's antennas.
The first antenna is fixed at the reference frequency $f_c$, i.e., $f_1 = f_c$.

We assume that the satellite is equipped with a uniform linear array (ULA) of antennas, while each IRS is comprised of $M$ passive reflecting elements arranged in a uniform planar array (UPA).
The wireless channel vector between the satellite and the \(l\)-th IRS is denoted by \(\mathbf{h}_{S,I_l} \in \mathbb{C}^{N \times 1}\), which represents the frequency-dependent channel vector from the \(N\) antennas of the satellite to the reference reflecting element of the \(l\)-th IRS,  and follows the satellite-to-air communication model~\cite{guo2018performance}, given by
\begin{align}
\label{channel1}
\mathbf{h}_{S,I_l}=\lambda\frac{\sqrt{G_S}}{4\pi d_{S,I_l}}\mathbf{z}\odot\mathbf{a}_{S,I_l},
\end{align}
where $\lambda$ is the carrier wavelength, 
$d_{S, I_l}$ is the distance between the satellite and the $l$-th IRS, 
$G_S$ represents the satellite beam gain factor, 
$\mathbf{z}$ following the Nakagami-m distribution represents the shadowing and small-scale fading, and $\mathbf{a}_{S, I_l}$ denotes the array steering vector from the satellite to the reference element of the $l$-th IRS.

Accordingly, the satellite beam gain $G_S$ is given by
\begin{align}
\label{antenna_gain}
G_S=G_S^{\max}\left(\frac{J_1(\nu )}{2\nu }+36\frac{J_3(\nu )}{\nu ^3}\right)^2,
\end{align}
where $G_S^{\max}$ denotes the maximum antenna gain, 
and $J_1(\nu )$ and $J_3(\nu )$ represent the first-order and third-order Bessel functions of the first kind,  respectively. 
In particular, $\nu=2.07123\frac{\sin\gamma_{\mathrm{off}}}{\sin\gamma _{3dB}}$,
where $\gamma_{\mathrm{off}}$ is the off-axis angle for the beam boresight, and $\gamma _{3dB}$ represents the 3 dB beam-width angle of the satellite's antenna.
Additionally, the array steering vector $\mathbf{a}_{S,I_l}$ can be expressed as
\begin{align}
\mathbf{a}_{S,I_l}=\left[e^{j\varLambda_{1}},e^{j\varLambda_{2}},\cdots,e^{j\varLambda_{N}}\right]^T,
\end{align}
where $\varLambda_{n}$ denotes the phase shift from the $n$-th antenna of the satellite to the reference element of the $l$-th IRS. 
In the following, we derive $\varLambda_{n}$.
From Fig.~\ref{system_3d}, we have
\begin{align}
\gamma_{S,{I_l}}=\arctan\frac{h}{\sqrt{\left ( x_s-x_{I_l}\right )^2+\left (y_s-y_{I_l} \right )^2)}}, 
\end{align}
and
\begin{align}
\phi_{S,{I_l}}=\arctan\frac{x_s-x_{I_l}}{y_s-y_{I_l}},
\end{align}
where $h=z_s-z_{I_{l}}$, 
$\gamma_{S, I_l}$ represents the elevation angle of departure (AoD) at the satellite, 
and $\phi_{S, I_l}$ is the azimuth angle of the $l$-th IRS relative to the satellite.
By designating the first element of each IRS as the reference element, the phase of the received signal at the $l$-th IRS from the $n$-th antenna of the satellite can be expressed as
\begin{align}
\omega_{n}=2\pi f_{n}\frac{d_{S,{I_l}}+(n-1)d\cos\gamma  _{S,{I_l}}\cos\phi_{S,{I_l}}}{c},
\end{align}
where $d=\frac{c}{2f_c}$ represents the antenna spacing, with $c$ being the speed of light. 
Accordingly, the relative phase shift from the $n$-th satellite antenna to the reference element of the $l$-th IRS is given by
\begin{align}
\varLambda_{n}=\omega_{n}-\omega_{0},
\end{align}
where $\omega_{0}=2\pi f_{1}\frac{\gamma_{S,{I_l}}}{c}$ corresponds to the phase of the signal received by the $l$-th IRS from the satellite reference antenna.

\subsection{IRS-User Link Channel Model}
The channel between the $l$-th IRS and the users is modeled as a Rician fading incorporating large-scale path loss.
In this case, the wireless channel between the $l$-th IRS and the users is given by
\begin{align}
\mathbf{h}_{I_{l},U}=\sqrt{\mathcal{L}}\mathbf{g},~U\in\left \{B~(\rm{Bob}),\mathit{W}~(\rm{Willie}) \right \},
\end{align}
where $\mathcal{L}$ represents the path loss of the IRS-user link, and $\mathbf{g}$ denotes the small-scale Rician fading vector.

We model the path-loss of the IRS-User link using the path-loss at the reference distance, denoted as $\mathcal{L}_{0}$, and the distance $d_{I_{l}, U}$ between the $l$-th IRS and users, expressed as
\begin{align}
\mathcal{L}=\mathcal{L}_{0}(\frac{d_{I_{l},U}}{d_0})^{-\zeta},
\end{align}
where $d_0=1~\rm m$ is the reference distance and $\zeta$ is the path-loss exponent.

Following \cite{guo2020weighted}, the small-scale fading vector $\mathbf{g}$ is given by
\begin{align}
\mathbf{g}=\sqrt{\frac{K}{K+1}}\mathbf{a}_{I_{l},U}+\sqrt{\frac{1}{K+1}}\widetilde{\mathbf{g}},
\end{align}
where $K$ is the Rician factor, 
$\mathbf{a}_{I_{l}, U}$ and $\widetilde{\mathbf{g}}$ represent the LoS and non-LoS (NLoS) components corresponding to $\mathbf{g}$, respectively.
The NLoS component follows Rayleigh fading, i.e., $\widetilde{\mathbf{g}}\sim\mathbb{CN}(0,\mathbf{I}_M)$, and the LoS component is represented by $\mathbf{a}_{I_{l},U}$.

The Rician IRS-user model provides a general statistical representation, where a Rician factor is used to control the relative strength between the LoS component and the aggregate Rayleigh scattering component. 
Therefore, different urban scattering conditions can be characterized by adjusting the Rician factor.

Note that the phase shift is partly influenced by the frequency $f_n$ of signals received by the IRS from the satellite antenna.
Thus, $\mathbf{a}_{I_{l},U}$ is a function of the subcarrier frequency $f_n$, which can be expressed as
\begin{equation}
\begin{aligned}
\mathbf{a}_{I_{l},U}(f_n)=&[e^{j\varOmega_{1,1}(f_n)},\cdots,e^{j\varOmega_{M_{z},M_{x}}(f_n)}]^{T},
\end{aligned}    
\end{equation}
where $\varOmega_{m_{z},m_{x}}(f_n)$ denote the phase shift introduced by the $(m_{z},m_{x})$-th element of the $l$-th IRS with respect to the reference element. Here, $m_{x}\in {1,2,\cdots, M_{x}}$ and $m_{z}\in {1,2,\cdots, M_{z}}$ represent the horizontal and vertical indices, respectively, where $M_{x}$ and $M_{z}$ denote the number of IRS elements along each dimension, and the total number of reflecting elements is $M=M_{x}M_{z}$. 


The phase shift of the received signal transmitted from the $n$-th satellite antenna to the $(m_{z},m_{x})$-th IRS element, with respect to the reference element of the $l$-th IRS, is expressed as
\begin{equation}
\begin{aligned}
\Delta r_{1_{m_{z},m_{x}}}=&(m_{z}-1)d\sin\gamma_{S,{I_l}}-(m_{x}-1)\\&d\cos\gamma _{S,{I_l}}\cos(\phi _{S,{I_l}}-\theta _{I_l}),  \end{aligned}   
\end{equation}
where $\theta _{I_l}$ represents the placement angle of the $l$-th IRS.

Similarly, the phase shift of the signal received by the user from the $(m_{z},m_{x})$-th IRS element is given by
\begin{equation}
\begin{aligned}
\Delta r_{2_{m_{z},m_{x}}}=&(m_{x}-1)d\cos\gamma _{{I_l},U}\cos(\phi _{{I_l},U}+\theta _{I_l})\\& +(m_{z}-1)d\sin\gamma_{{I_l},U},    
\end{aligned}   
\end{equation}
where the AoD of the $l$-th IRS to user $\gamma_{{I_l},U}$ and the azimuth angle of user relative to the $l$-th IRS $\phi_{{I_l},U}$ can be respectively expressed as
\begin{align} 
\gamma_{{I_l},U}=\arctan\frac{z_{I_{l_1}}}{\sqrt{\left ( x_u-x_{I_l}\right )^2+\left (y_u-y_{I_l} \right )^2)}},
\end{align}
\begin{align}
\phi_{{I_l},U}=\arctan\frac{x_u-x_{I_l}}{y_u-y_{I_l}}.
\end{align}
Thus we have
\begin{equation}
\begin{aligned}
\varOmega_{m_{z},m_{x}}(f_{n}) &= \frac{2\pi f_{n}}{c}\biggl( d_{I_l,U} + \Delta r_{1_{m_{z},m_{x}}} + \Delta r_{2_{m_{z},m_{x}}} \biggr),
\end{aligned}
\end{equation}
where $d_{I_l,U}$ denotes the distance between the $l$-th IRS and users.

Following~\cite{9524501}, the legitimate network is assumed to have perfect instantaneous CSI of the cascaded satellite-IRS-Bob channel.
For the cascaded satellite-IRS-Willie channel, only statistical CSI is available to the legitimate network, and Willie is also assumed to know the corresponding statistical CSI.
The cascaded satellite--IRS--Bob CSI is obtained through pilot training and feedback within each optimization interval.
The satellite transmits predefined pilots over the selected OFDM subcarriers, and the distributed IRSs apply known training reflection patterns.
Bob estimates the effective cascaded channel from the pilot observations and feeds the estimated CSI back to the gateway through the feeder link~\cite{Wang2025ReducingCSIIRS}.
The gateway computes the transmit beamformer and IRS phase shifts, and sends the phase-control commands to the distributed IRSs through dedicated control links.
For the cascaded satellite-IRS-Willie channel, only statistical CSI is available.

Since the fractional bandwidth of each subcarrier is negligible compared with the carrier frequency (i.e., $\Delta f / f_c \ll 1$), the large-scale path gain of the satellite–IRS link in~\eqref{channel1} is approximately frequency-independent. 
Consequently, $f_n$ only affects the phase response in $\mathbf{a}_{S,I_l}$, and the magnitude remains nearly constant over all subcarriers.
\subsection{Signal Transmission}
In the proposed distributed multi-IRS-aided CPWC system, the satellite transmits signal $s(k)$ with zero mean and unit variance to Bob via multiple IRSs.
Accordingly, the received signal at Bob is given by
\begin{align}
\label{y_B}
y_{B}(k)=\sum_{l=1}^{L}\mathbf{h}^{H}_{I_{l},B}\mathbf{\Phi}_{l}\mathbf{H}^{H}_{S,I_{l}}\mathbf{w}s(k)+n_{B}(k),
\end{align}   
where \(k=1,\ldots,K_n\) indexes the transmitted symbols, and $K_n$ is the total number of symbols.
\(\mathbf{\Phi}_l=\operatorname{diag}(\boldsymbol{\theta}_l)\) 
denotes the reflection-coefficient matrix of the \(l\)-th IRS.
The channel vector from the \(l\)-th IRS to Bob is denoted by 
\(\mathbf{h}_{I_{l},B}\), 
\(\mathbf{w}\in\mathbb{C}^{N\times1}\) denotes the beamforming vector for the transmitted signal \(s(k)\), 
and \(n_B(k)\sim\mathcal{CN}(0,\sigma_B^2)\) is the complex Gaussian noise at Bob.

The satellite-to-IRS channel matrix is defined as
\begin{equation}
\mathbf{H}_{S,I_l}
=\big[\mathbf{h}_{S,I_l,1},\mathbf{h}_{S,I_l,2},\ldots,\mathbf{h}_{S,I_l,M}\big],
\end{equation}
where each column $\mathbf{h}_{S,I_l,m}\in\mathbb{C}^{N\times1}$ represents the channel vector
between the satellite’s $N$ transmit antennas and the $m$-th reflecting element of the $l$-th IRS.
The frequency-dependent response of each channel can be further expressed as $\mathbf{h}_{S,I_l,m}$.

The IRS reflection-coefficient vector is modeled as
\begin{align}
\boldsymbol{\theta}_l = \big[ \beta_l^1 e^{j\varphi_l^1}, \ldots, \beta_l^m e^{j\varphi_l^m}, \ldots, \beta_l^M e^{j\varphi_l^M} \big]^T,
\end{align}
where $\beta_l^m\in[0,1]$ and $\varphi_l^m\in[0,2\pi)$, for $m=1,2,\cdots,M$, are the amplitude reflection and phase shift coefficients of the $m$-th element of the $l$-th IRS, respectively.
To maximize reflection power, we set $\beta_l^m=1,~\forall m$.
Accordingly, the achievable covert rate at Bob is expressed as
\begin{align}
R_{B}=\log_{2}\left(1+\frac{\left|\sum_{l=1}^{L}\mathbf{h}_{I_{l},B}^{H}\mathbf{\Phi}_{l}\mathbf{H}^{H}_{S,I_{l}}\mathbf{w}\right|^{2}}{\sigma_{B}^{2}}\right).
\end{align}

The received signal $y_W(k)$ at Willie can be characterized under two competing hypotheses.
Specifically, $\mathcal{H}_0$ denotes the null hypothesis where the satellite remains silent (i.e., no communication with Bob), whereas $\mathcal{H}_1$ corresponds to the alternative hypothesis where covert communication occurs.
In this context, the received signal at Willie can be given by
\begin{equation}
\begin{aligned}
\begin{cases}
\mathcal{H}_{0}:y_{W}(k)=n_{W}(k),\\
\mathcal{H}_{1}:y_{W}(k)=\sum_{l=1}^{L}\mathbf{h}_{I_{l},W}^{H}\mathbf{\Phi}_{l}\mathbf{H}^{H}_{S,I_{l}}\mathbf{w}s(k)+n_{W}(k),
\end{cases}
\end{aligned}    
\end{equation}
where $\mathbf{h}_{I_{l},W}$ denotes the channel vector from the $l$-th IRS to Willie, and $n_W(k) \sim \mathbb{C}\mathbb{N}(0, \sigma_W^2)$ denotes the complex Gaussian noise at Willie.

We assume that Willie adopts a conventional radiometer to detect the covert communication~\cite{7964713}, where the average received signal power $\bar{P}_{W}$ is adopted as the test statistic.
Thus, the decision rule is formulated as
\begin{align}
\bar{P}_{W}=\frac{1}{N_{\mathrm{det}}}\sum_{k=1}^{N_{\mathrm{det}}}|y_W[k]|^2\overset{\mathcal H_{1}}{\underset{\mathcal H_{0}}{\gtrless}}\tau,
\end{align}
where $\tau$ denotes the detection threshold, and $N_{\mathrm{det}}$ denotes the number of received samples used by Willie's radiometer.

The false alarm probability and miss detection probability at Willie are then defined as $P_{\mathrm{FA}} = \Pr\{\mathcal{D}_1 \mid \mathcal{H}_0\}$ and $P_{\mathrm{MD}} = \Pr\{\mathcal{D}_0 \mid \mathcal{H}_1\}$, where $\mathcal{D}_0$ and $\mathcal{D}_1$ denote Willie's decisions in favor of $\mathcal{H}_0$ and $\mathcal{H}_1$, respectively.
The total detection error  probability at Willie is given by
\begin{align}
P_e=P_{FA}+P_{MD}.
\end{align}

\section{Relative-Entropy-Based Warm-Start Design}\label{section_relative_entropy}
\subsection{Detection Performance at Willie}
In this subsection, Willie is assumed to adopt the optimal statistical hypothesis test to improve the detection capability. According to~\cite{6584948}, the minimum detection error probability achieved by the optimal test is given by
\begin{align}
\label{error detection}
P_e^* = 1 - \mathcal{V}_T\left(\mathbb{P}_0,\mathbb{P}_1\right),
\end{align}
where $\mathcal{V}_T\left(\mathbb{P}_0,\mathbb{P}_1\right)$ denotes the total variation distance between $\mathbb{P}_0$ and $\mathbb{P}_1$, which correspond to the probability distributions of the received signal at Willie under $\mathcal{H}_0$ and $\mathcal{H}_1$, respectively, given by
\begin{equation}
\label{probability dis}
\begin{cases}
\mathbb{P}_{0} \triangleq f(y_{W}|\mathcal{H}_{0}) = \mathbb{CN}(0,\sigma_{W}^{2}),\\[2pt]
\mathbb{P}_{1} \triangleq f(y_{W}|\mathcal{H}_{1}) = \\ \qquad \quad \mathbb{CN}\!\left(0,\left|\sum_{l=1}^{L}\mathbf{h}_{I_{l},W}^{H}\mathbf{\Phi}_{l}\mathbf{H}^{H}_{S,I_{l}}\mathbf{w}\right|^{2}+\sigma_{W}^{2}\right).
\end{cases}
\end{equation}
For notational convenience, let
\begin{align}
\sigma_{\nu}^{2}=\left|\sum_{l=1}^{L}\mathbf{h}_{I_{l},W}^{H}\mathbf{\Phi}_{l}\mathbf{H}^{H}_{S,I_{l}}\mathbf{w}\right|^{2}.
\end{align}
Since the exact evaluation of the total variation distance is difficult, Pinsker’s inequality is invoked to obtain the upper bound~\cite{6584948}
\begin{align}
\label{total variation distance}
\mathcal{V}_{T}\left(\mathbb{P}_{0},\mathbb{P}_{1}\right)\leq\sqrt{0.5\,\mathcal{D}\left(\mathbb{P}_{0}\parallel\mathbb{P}_{1}\right)},
\end{align}
where $\mathcal{D}\left(\mathbb{P}_{0}\parallel\mathbb{P}_{1}\right)$ denotes the relative entropy between $\mathbb{P}_{0}$ and $\mathbb{P}_{1}$. By combining~\eqref{error detection} and~\eqref{total variation distance}, the covertness requirement $P_e^*\geq 1-\varepsilon$ with a small positive parameter $\varepsilon$ can be enforced by the relative-entropy-based constraint $\mathcal{D}\left(\mathbb{P}_{0}\parallel\mathbb{P}_{1}\right)\le 2\varepsilon^2$~\cite{6584948}. For the complex Gaussian distributions in~\eqref{probability dis} with variance parameter $\sigma_{\nu}^{2}$, the relative-entropy-based constraint is equivalent to
\begin{align}
\label{covertness constraint}
\ln\left(1+\frac{\sigma_{\nu}^{2}}{\sigma_{W}^{2}}\right)
+\frac{\sigma_{W}^{2}}{\sigma_{\nu}^{2}+\sigma_{W}^{2}}
-1
\le 2\varepsilon^2. 
\end{align}

\subsection{Problem Formulation}
In this subsection, we aim to maximize the covert rate of the proposed CPWC system by jointly optimizing the transmit beamforming vector and the phase shifts of $L$ IRSs, subject to the predefined constraint on covertness.
In this context, the optimization problem is formulated as
\begin{equation}
\label{Problem}
\begin{aligned}
\underset{\mathbf{w},\boldsymbol{\theta}_l}{\max}\quad & R_B \\
\text{s.t.}\quad 
& \|\mathbf{w}\|^2 \leq P_S^{\max},\ \eqref{covertness constraint},\ 
  |\boldsymbol{\theta}_l[m]| = 1,\ \forall l,\forall m,
\end{aligned}
\end{equation}
where $P_S^{max}$ denotes the transmit power budget at the satellite.
Note that $|\mathbf{w}\|^2 \leq P_S^{\max}$ corresponds to the transmit power limitation of the satellite, and \eqref{covertness constraint} ensures the satisfaction of covert communications.

By diagonalizing $\mathbf{h}_{I_l,B}$ as $\mathbf{H}_{I_l,B}=\mathrm{diag}(\mathbf{h}_{I_l,B}^{H})$, the equivalent cascaded channel from the satellite to Bob via the $l$-th IRS is expressed as $\mathbf{G}_{B_l}=\mathbf{H}_{I_l, B}\mathbf{H}^{H}_{S, I_l}$.
Thus, we have $\sum_{l=1}^{L}\mathbf{h}^{H}_{I_{l},B}\mathbf{\Phi}_{l}\mathbf{H}^{H}_{S,I_{l}}\mathbf{w}=\sum_{l=1}^{L}\boldsymbol{\theta}_{l}^{T}\mathbf{G}_{B_l}\mathbf{w}$.
Furthermore, let $\hat{\boldsymbol{\theta}}=\left[\mathbf{\theta}_{1}^{T},\cdots,{\theta}_{l}^{T},\cdots,\mathbf{\theta}_{L}^{T}\right]\in\mathbb{C}^{1\times LM}$ and $\hat{\mathbf{G}}_{B}=\left[\mathbf{G}_{B_1};\cdots;\mathbf{G}_{B_l};\cdots;\mathbf{G}_{B_L}\right]\in\mathbb{C}^{LM\times N}$.
Then, the objective can be transformed into
\begin{align}
\label{simple R}
R_B=\log_{2}{\left ( 1+\frac{\left |\hat{\boldsymbol{\theta}}\hat{\mathbf{G}}_{B}\mathbf{w}\right |^2}{\sigma_B^2} \right )}.
\end{align}

\subsection{Alternative Optimization Algorithm}
The intricate coupling of $\mathbf{w}$ and $\boldsymbol{\theta}_l$ in both the objective function and constraint \eqref{covertness constraint} makes problem \eqref{Problem} difficult to solve.
As a result, an iterative algorithm is developed by leveraging an AO scheme to decouple $\mathbf{w}$ and $\boldsymbol{\theta}_l$. Moreover, constraint \eqref{covertness constraint} is reformulated into an equivalent, more tractable form by modeling its left-hand side as
\begin{align}
f(x)=\ln\left(x\right)+\frac{1}{x}-1,
\end{align}
with $x=1+\frac{\sigma_{\nu}^{2}}{\sigma_W^2} > 1$.
Thus, \eqref{covertness constraint} can be rewritten as $f(x)\le2\varepsilon^2$, where $f(x)$ is an increasing function with respect to $x$, yielding $x\leq f^{-1}(2\varepsilon ^2)$, where $f^{-1}(x)$ represents the inverse function of $f(x)$.
Similar to the derivation of the equivalent channel for Bob, we define the channel parameters for Willie as $\mathbf{H}_{I_l,W}={diag}\big(\mathbf{h}_{I_l,W}^{H}\big)$. The equivalent cascaded channel from the satellite to Willie via the $l$-th IRS is expressed as $\mathbf{G}_{W_l}=\mathbf{H}_{I_l,W}\,\mathbf{H}_{S,I_l}^{H}$. Furthermore, let $\hat{\mathbf{G}}_{W}=[\mathbf{G}_{W_1};\cdots;\mathbf{G}_{W_l};\cdots;\mathbf{G}_{W_L}]$.
Thus, we have $\sigma_{\nu}^{2}=\left|\hat{\boldsymbol{\theta}}\hat{\mathbf{G}}_{W}\mathbf{w}\right |^2$.
Substituting $\sigma_{\nu}^{2}$ into $x\leq f^{-1}(2\varepsilon ^2)$ yields 
\begin{align}
\label{simple problemb}
1+\frac{\left |\hat{\boldsymbol{\theta}}\hat{\mathbf{G}}_{W}\mathbf{w}\right |^2}{\sigma_{W}^2}\leq f^{-1}(2\varepsilon ^2).
\end{align}
With the above derivations, the optimization problem is decomposed into two sub-problems, which can be solved alternately to obtain the optimal transmit and reflection beamformers, respectively.

\emph{1)~Sub-problem 1: Transmit beamforming optimization}
From \eqref{simple R}, $R_B$ monotonically increases with Bob's SNR $\gamma_B$, and thus the objective function can be reformulated in terms of $\gamma_B$.
Given $\hat{\mathbf{\theta}}$, we first formulate the sub-problem 1 as
\begin{equation}
\label{Problem1.1}
\begin{aligned}
\underset{\mathbf{w}}{\text{max}}&\quad~\gamma_B=\frac{\left |\hat{\boldsymbol{\theta}}\hat{\mathbf{G}}_{B}\mathbf{w}\right |^2}{\sigma_B^2} \\
\text {s.t.}&\quad\|\mathbf{w}\|^2 \leq P_S^{\max}, \eqref{simple problemb}.
\end{aligned}
\end{equation}
However, \eqref{Problem1.1} remains non-convex since the objective function is not concave with respect to $\mathbf{w}$.
To address this issue, we introduce a new variable $\mathbf{W}=\mathbf{w}\mathbf{w}^H\in\mathbb{C}^{N\times N}$, and define
\begin{align}
\mathbf{C_B}=\frac{{\hat{\mathbf{G}}_{B}^H}\hat{\boldsymbol{\theta}}^H\hat{\boldsymbol{\theta}}\hat{\mathbf{G}}_{B}}{\sigma_B^2},    
\mathbf{C_W}=\frac{{\hat{\mathbf{G}}_{W}^H}\hat{\boldsymbol{\theta}}^H\hat{\boldsymbol{\theta}}\hat{\mathbf{G}}_{W}}{\sigma_W^2}.  
\end{align}
Therefore, the sub-problem can be equivalently transformed into 
\begin{align}
\underset{\mathbf{W}}{\text{max}}\quad&~\text{tr}(\mathbf{C_B}\mathbf{W})\label{Problem1.1.0}\\
\text {s.t.}\quad&~\text{tr}(\mathbf{W})\leq P_S^{max},\tag{\ref{Problem1.1.0}{a}} \label{Problem1.1.0a}\\
&~\text{tr}(\mathbf{C_W}\mathbf{W})\leq f^{-1}(2\varepsilon ^2)-1,\tag{\ref{Problem1.1.0}{b}}\label{Problem1.1.0b}\\
&~\mathbf{W}\succeq \mathbf{0},\tag{\ref{Problem1.1.0}{c}}\label{Problem1.1.0c}\\
&~\text{rank}(\mathbf{W})=1.\tag{\ref{Problem1.1.0}{d}}\label{Problem1.1.0d}
\end{align}
Since $\mathbf{W}$ is a rank-one symmetric positive semidefinite (PSD) matrix, the SDR method can be applied to drop the rank-one constraint.
As a result, \eqref{Problem1.1.0} is converted to
\begin{align}
\underset{\mathbf{W}}{\text{max}}\quad&~\text{tr}(\mathbf{C_B}\mathbf{W})\label{Problem1.1.1}\\
\text {s.t.}\quad&~\eqref{Problem1.1.0a},\eqref{Problem1.1.0b},\eqref{Problem1.1.0c}.\tag{\ref{Problem1.1.1}{b}}\label{Problem1.1.1b}
\end{align}
\eqref{Problem1.1.1} is a convex semidefinite programming (SDP) problem that can be solved by CVX which satisfies the rank-one constraint.
Thus, dropping constraint \eqref{Problem1.1.0d} does not compromise the optimality of the original problem \eqref{Problem1.1.0}.
\begin{thm}\label{thm1}
The optimal solution $\mathbf{W}^*$ to problem \eqref{Problem1.1.1} satisfies the rank-one property, i.e., $\text{rank}(\mathbf{W}^*)=1$.
\end{thm}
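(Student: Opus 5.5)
The plan is to exploit the fact that $\mathbf{C_B}$ is itself rank one. Write $\mathbf{b}=\hat{\mathbf{G}}_B^H\hat{\boldsymbol{\theta}}^H/\sigma_B\in\mathbb{C}^{N\times 1}$ and $\mathbf{u}=\hat{\mathbf{G}}_W^H\hat{\boldsymbol{\theta}}^H/\sigma_W$. Then $\mathbf{C_B}=\mathbf{b}\mathbf{b}^H$ and $\mathbf{C_W}=\mathbf{u}\mathbf{u}^H$, and problem \eqref{Problem1.1.1} has only two linear inequality constraints besides $\mathbf{W}\succeq\mathbf{0}$. Two routes are available. One is a general rank-reduction result: for a complex SDP with two linear constraints, some optimal solution has rank $r$ with $r^2\le 2$, hence $r=1$. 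The other is a direct argument, which I would give explicitly because it also yields the optimal beamformer in closed form.

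\textbf{Reduction to a vector problem.} Take any feasible $\mathbf{W}$ and factor it as $\mathbf{W}=\sum_i \mathbf{v}_i\mathbf{v}_i^H$. The objective is $\sum_i|\mathbf{b}^H\mathbf{v}_i|^2$, the power is $\sum_i\|\mathbf{v}_i\|^2$, and the leakage is $\sum_i|\mathbf{u}^H\mathbf{v}_i|^2$. Replace all the $\mathbf{v}_i$ by a single vector $\mathbf{w}$ built as follows.

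\begin{itemize}
\item Split each $\mathbf{v}_i$ into its component in $\mathcal{S}=\mathrm{span}\{\mathbf{b},\mathbf{u}\}$ and its component orthogonal to $\mathcal{S}$. Dropping the orthogonal parts changes neither the objective nor the leakage and only reduces power, so we may assume $\mathbf{W}$ is supported on $\mathcal{S}$.
\item When $\mathbf{b}$ and $\mathbf{u}$ are linearly independent, this leaves a $2\times2$ PSD matrix $\tilde{\mathbf{W}}$ in the basis of $\mathcal{S}$, subject to three scalar quantities.
\item If $\mathbf{b}\parallel\mathbf{u}$, every feasible $\mathbf{W}$ is dominated by $t\,\mathbf{b}\mathbf{b}^H/\|\mathbf{b}\|^2$ with $t$ the largest value satisfying both budgets, and rank one is immediate.
\end{itemize}

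\textbf{Rank-one optimality.} Rank one is then shown either way:
\begin{enumerate}
\item \emph{KKT route.} With multipliers $\mu\ge0$ for \eqref{Problem1.1.0a} and $\eta\ge0$ for \eqref{Problem1.1.0b}, stationarity gives the dual slack $\mathbf{Z}=\mu\mathbf{I}+\eta\mathbf{u}\mathbf{u}^H-\mathbf{b}\mathbf{b}^H\succeq\mathbf{0}$ with $\mathbf{Z}\mathbf{W}^*=\mathbf{0}$.
\begin{itemize}
\item Suppose $\mu>0$. Then $\mu\mathbf{I}+\eta\mathbf{u}\mathbf{u}^H$ is positive definite, so subtracting the rank-one $\mathbf{b}\mathbf{b}^H$ gives $\mathrm{rank}(\mathbf{Z})\ge N-1$. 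Complementary slackness then forces $\mathrm{rank}(\mathbf{W}^*)\le1$, and $\mathbf{W}^*\neq\mathbf{0}$ because $\mathbf{b}\neq\mathbf{0}$ makes the optimum positive.
\item Suppose $\mu=0$. Then $\mathbf{Z}=\eta\mathbf{u}\mathbf{u}^H-\mathbf{b}\mathbf{b}^H\succeq\mathbf{0}$ requires $\mathbf{b}\parallel\mathbf{u}$, which is the degenerate case already handled.
\end{itemize}
\item \emph{Closed-form route.} Solve the problem directly: maximize $|\mathbf{b}^H\mathbf{w}|^2$ subject to $\|\mathbf{w}\|^2\le P_S^{\max}$ and $|\mathbf{u}^H\mathbf{w}|^2\le f^{-1}(2\varepsilon^2)-1$. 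The optimizer is a combination of $\mathbf{b}$ and its projection onto $\mathbf{u}^\perp$. This makes $\mathbf{W}^*=\mathbf{w}^*\mathbf{w}^{*H}$ explicit and matches the relaxed optimum.
\end{enumerate}

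\textbf{Main obstacle.} The hardest step is making the dual argument rigorous.
\begin{itemize}
\item \emph{Strong duality.} Slater's condition holds: $\mathbf{W}=\delta\mathbf{I}$ with small $\delta>0$ is strictly feasible, since $f^{-1}(2\varepsilon^2)-1>0$.
\item \emph{Non-uniqueness.} If the SDP has several optimal solutions, the statement should be read as "there exists a rank-one optimal $\mathbf{W}^*$." The $\mu>0$ case nevertheless gives rank at most one for every optimizer.
\item \emph{Sign of $\mu$.} One still has to show $\mu>0$ in the generic case. If $\mu=0$ then $\mathbf{Z}\succeq\mathbf{0}$ needs $\mathbf{b}\parallel\mathbf{u}$, as argued above, so for generic (non-parallel) channels $\mu>0$ follows.
\end{itemize}
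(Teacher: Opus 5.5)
Your KKT route is correct and has the same skeleton as the paper's Appendix~A. Your dual slack $\mathbf{Z}=\mu\mathbf{I}+\eta\mathbf{u}\mathbf{u}^H-\mathbf{b}\mathbf{b}^H$ is exactly the paper's $\mathbf{X}^*=\alpha_2^*\mathbf{I}_N-(\mathbf{C_B}-\alpha_1^*\mathbf{C_W})$. Both arguments conclude from $\mathrm{rank}(\mathbf{Z})\ge N-1$ and $\mathbf{Z}\mathbf{W}^*=\mathbf{0}$.

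The difference is in how that rank bound is obtained.
\begin{itemize}
\item The paper asserts that at most one eigenvalue of $\mathbf{X}^*$ can vanish ``for random channel realizations.'' This is an unproved genericity appeal, and it fails when $\alpha_2^*=0$ and $N\ge 3$: then $\mathbf{X}^*=\alpha_1^*\mathbf{C_W}-\mathbf{C_B}\succeq\mathbf{0}$ has rank at most one.
\item You derive the bound deterministically from $\mu>0$ and the rank-one structure of $\mathbf{C_B}$. You also show that $\mu=0$ forces $\mathbf{b}\parallel\mathbf{u}$.
\end{itemize}
Your version buys a precise statement. Every optimizer is rank one when $\mathbf{b}\neq\mathbf{0}$ and $\mathbf{b}\not\parallel\mathbf{u}$. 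In the parallel case the claim correctly weakens to existence of a rank-one optimizer, since higher-rank optimizers can then occur. Your proof that $\mathbf{W}^*\neq\mathbf{0}$ (the optimum is strictly positive) also replaces the paper's informal ``infeasible in practice.''

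The only loose ends are in your side routes. The closed-form route asserts, rather than proves, that the vector optimum matches the relaxed optimum. The ``reduction to a vector problem'' never actually builds the single vector. Neither matters, because the KKT argument, or the Huang--Palomar bound $r^2\le 2$ you cite, already settles the claim.
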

\begin{IEEEproof}
See Appendix~\ref{proof of thm1}.
\end{IEEEproof}

To retrieve the optimal transmit beamforming vector $\mathbf{w}^*$ from $\mathbf{W}^*$,we apply eigenvalue decomposition (EVD) on $\mathbf{W}^*$, yielding $\mathbf{W}^*=\lambda_{\mathbf{W}^*_{\text{max}}}\mathbf{x}\mathbf{x}^H$, where $\lambda_{\mathbf{W}^*_{\text{max}}}$ is the maximum eigenvalue of $\mathbf{W}^*$, and $\mathbf{x}$ is the corresponding normalized eigenvector.
Consequently, the optimal beamforming vector is obtained as $\mathbf{w}^*=\sqrt{\lambda_{\mathbf{W}^*_{\text{max}}}}\mathbf{x}$.

\emph{2)~Sub-problem 2: Reflection beamforming optimization}

In the subsequent step, we optimize the phase shifts of $L$ IRSs for a given $\mathbf{w}$.
Jointly considering~\eqref{simple problemb} and~\eqref{Problem}, the optimization problem is reformulated as
\begin{equation}
\label{Problem1.2}
\begin{aligned}
   \underset{\hat{\boldsymbol{\theta}}}{\text{max}}\quad&~\gamma_B=\frac{\left |\hat{\boldsymbol{\theta}}\hat{\mathbf{G}}_{B}\mathbf{w}\right |^2}{\sigma_B^2} \\
\text {s.t.}\quad&|\boldsymbol{\theta}_l[m]| = 1,\ \forall l,\forall m, \eqref{simple problemb}. 
\end{aligned}
\end{equation}
The problem \eqref{Problem1.2} is non-convex due to its non-concave objective function and unit-modulus constraint on the IRS reflection coefficients.
To address this challenge, we simplify the objective function and~ \eqref{simple problemb} with auxiliary variables $\mathbf{D_B}$ and $\mathbf{D_W}$, given by
\begin{align}
\mathbf{D_B}=\frac{{\hat{\mathbf{G}}_{B}}{\mathbf{w}}{\mathbf{w}^H}\hat{\mathbf{G}}_{B}^H}{\sigma_B^2}, \quad   
\mathbf{D_W}=\frac{{\hat{\mathbf{G}}_{w}}{\mathbf{w}}{\mathbf{w}^H}\hat{\mathbf{G}}_{W}^H}{\sigma_W^2}.
\end{align}

Accordingly, the objective function can be reformulated as $\text{tr} ( \hat{\boldsymbol{\theta}}\mathbf{D_B}\hat{\boldsymbol{\theta}}^H )$.
Similarly,~\eqref{simple problemb} is equivalent to $\text{tr} ( \hat{\boldsymbol{\theta}}\mathbf{D_W}\hat{\boldsymbol{\theta}}^H )\leq f^{-1}(2\varepsilon ^2)-1$.
Second, we define $\mathbf{\Theta}=\hat{\boldsymbol{\theta}}^H\hat{\boldsymbol{\theta}}$, thus the sub-problem 2 can be transformed into
\begin{equation}
  \begin{aligned}
\underset{\mathbf{\Theta}}{\text{max}}\quad&~\text{tr}(\mathbf{D_B}\mathbf{\Theta})\label{Problem1.2.1}\\
\text {s.t.}\quad&~\text{tr}(\mathbf{D_W}\mathbf{\Theta})\leq f^{-1}(2\varepsilon ^2)-1,\\
&~\mathbf{\Theta}\succeq \mathbf{0},~\text{rank}(\mathbf{\Theta})=1,\\
&~\mathbf{\Theta}_{n,n}=1,~n=1,2,\cdots ,lM.
\end{aligned}  
\end{equation}
Since the rank-one constraint renders the problem non-convex, it is relaxed using the SDR method.
Let $\mathbf{\Theta}^*$ denote the optimal solution to~\eqref{Problem1.2.1}.
However, $\mathbf{\Theta}^*$ is not guaranteed to be a rank-one matrix.
Thus, if $\text{rank}(\mathbf{\Theta}^*>1)$, a rank-one solution can be reconstructed by using the Gaussian randomization method. 

Specifically, we first perform EVD on $\mathbf{\Theta}^*$, yielding $\mathbf{\Theta}^*=\mathbf{U}\mathbf{\Sigma} \mathbf{U}^H$, where $\mathbf{U}$ is the eigenvector matrix corresponding to $\mathbf{\Theta}^*$, and $\mathbf{\Sigma}$ is a diagonal matrix with each diagonal element representing an eigenvalue of $\mathbf{\Theta}^*$.
Thus, we obtain $\bar{\boldsymbol{\theta}^*}=\mathbf{U}\sqrt{\mathbf{\Sigma}} \mathbf{r}$, where $\mathbf{r} \sim \mathcal{C}\mathcal{N}(0,\mathbf{I_N})$ is a Gaussian randomization vector.
Among the obtained candidates, the one maximizing the objective function in~\eqref{Problem1.2.1} is selected as the optimal reflection beamforming vector $\boldsymbol{\theta}$. 
To satisfy the modulus-1 constraint, the final solution to problem~\eqref{Problem1.2.1} is obtained as $\boldsymbol{\theta}^*=e^{j\arg (\bar{\mathbf{\theta}^*})}$.

In practice, realizing continuous (infinite-resolution) phase shifts at the IRS is challenging due to hardware limitations~\cite{9122596}.
Thus, we further consider IRSs with finite phase-shift resolution and optimize discrete phase shifts.
Assuming 1-bit reflection levels, i.e., possible phase shift $\theta\in \left \{ -1,1\right \}$, we employ the nearest point projection (NPP)~\cite{guo2019weightedsumrateoptimizationintelligent} to determine the discrete phase shifts for IRSs.
Specifically, the continuous phase shifts obtained from problem~\eqref{Problem1.2} are projected onto the nearest feasible points in the set $\{-1,1\}$.
\begin{algorithm}[t]
\caption{AO--SDR Algorithm for Multi-IRS-Aided CPWC System}
\label{alg1}
\begin{algorithmic}[1]
\STATE Initialize $\hat{\boldsymbol{\theta}}^{0}$, set $\delta$, $I_{\max}$, $i=1$, $v=1$.
\WHILE{$v>\delta$ and $i<I_{\max}$}
    \STATE Solve~\eqref{Problem1.1.1} with $\hat{\boldsymbol{\theta}}^{i-1}$ to get $\mathbf{W}^i$, extract $\mathbf{w}^i$ via EVD.
    \STATE Solve SDR problem~\eqref{Problem1.2.1} with $\mathbf{w}^i$ to get $\mathbf{\Theta}^i$.
    \IF{$\mathrm{rank}(\mathbf{\Theta}^i)=1$}
        \STATE Set $\boldsymbol{\theta}^i$ as principal eigenvector.
    \ENDIF
    \STATE Update $R_{B}^i$, $v=\frac{R_{B}^i-R_{B}^{i-1}}{R_{B}^i}$, $i\!\leftarrow\!i+1$.
\ENDWHILE
\IF{$\mathrm{rank}(\mathbf{\Theta}^i)>1$}
    \STATE Apply Gaussian randomization on $\mathbf{\Theta}^i$ to get $\{\hat{\boldsymbol{\theta}}^{(k)}\}$, select $\hat{\boldsymbol{\theta}}^{\star}\!=\!\arg\max_{k}R_B^{(k)}$.
\ELSE
    \STATE $\hat{\boldsymbol{\theta}}^{\star}\!=\!\boldsymbol{\theta}^i$.
\ENDIF
\STATE \textbf{Output:} $\mathbf{w}^{\star}=\mathbf{w}^i$, $\hat{\boldsymbol{\theta}}^{\star}$, $R_B^{\star}$.
\end{algorithmic}
\end{algorithm}

\emph{3)~Proposed Optimization Algorithm Analysis}

The proposed AO-based algorithm for problem~\eqref{Problem} in the multi-IRS-aided CPWC system is summarized in Algorithm~\ref{alg1}. 
It alternately solves two subproblems in the SDR-domain until the objective improvement falls below a threshold $\delta$, and then applies Gaussian randomization to obtain a feasible rank-one solution.
Convergence is guaranteed for two reasons. 
First, monotonic convergence holds because the AO iterations are performed entirely in the SDR domain. After convergence, Gaussian randomization is applied once as a post-processing step to obtain a feasible rank-one solution. 
Second, the iterative process converges as the objective value of problem~\ref{Problem} is upper bounded by the covertness and power constraint.

The computational complexity of Algorithm~1 mainly comes from solving the two SDR-based subproblems in each AO iteration. 
Since Subproblem~1 optimizes the lifted transmit beamforming matrix with dimension $N$, its SDR complexity is $\mathcal{O}(N^{3.5})$. Subproblem~2 optimizes the lifted IRS phase-shift matrix with dimension $LM$, and its SDR complexity is $\mathcal{O}((LM)^{3.5})$. 
Let $I_{\mathrm{total1}}$ denote the total number of AO iterations. 
When $\boldsymbol{\Theta}$ is not rank-one, Gaussian randomization with $N_{\mathrm{rand}}$ trials is used for rank-one recovery, which introduces an additional post-processing complexity of $\mathcal{O}(N_{\mathrm{rand}}(LM)^2)$. 
Therefore, the overall computational complexity of Algorithm~1 is 
\begin{align}
\mathcal{O}\!\left(
I_{\mathrm{total1}}
\left[
N^{3.5}+(LM)^{3.5}
\right]
+
N_{\mathrm{rand}}(LM)^2
\right).    
\end{align}
~It is worth highlighting that Algorithm~\ref{alg1} is not intended to be the final optimal design. Instead, it provides a feasible high-quality solution that serves as a warm start for subsequent refinements in the next section.

\subsection{Joint Optimization Algorithm}
In this subsection, a joint optimization algorithm is developed to improve performance beyond Algorithm~\ref{alg1}. 
Although the AO method alternately updates the transmit beamformer and IRS phase shifts, it may converge to a suboptimal stationary point due to the decoupled structure. A successive convex approximation (SCA)-based second-order cone programming (SOCP) framework~\cite{9961865} is adopted to address the limitation, which jointly optimizes the transmit beamforming and IRS phase shifts in each iteration to enhance convergence efficiency and achievable performance.

For clarity, we briefly outline the SOCP scheme.
Let $\mathbf{g}_{1}=\frac{\hat{\mathbf{\theta}}\hat{\mathbf{G}}_{B}}{\sigma_B}$, and thus we have $\gamma_B=|\mathbf{g}_{1}\mathbf{w}|^2$.
According to~\cite{9961865}, the following inequality holds
\begin{equation}\label{g_1w}
\begin{aligned}
|\mathbf{g}_{1}\mathbf{w}|^{2}
&\geq
\Re\big\{\big(\mathbf{b}^{(i)}\big)^{H}
        \big[\big(a^{(i)}\big)\mathbf{g}_{1}^{H}+\mathbf{w}\big]\big\}
        - \frac{1}{2}\big\|\mathbf{b}^{(i)}\big\|^{2} \\
&\quad
  - \frac{1}{2}\Big\|\big(a^{(i)}\big)\mathbf{g}_{1}^{H}-\mathbf{w}\Big\|^{2}
  - \big|a^{(i)}\big|^{2} \triangleq \varpi,
\end{aligned}
\end{equation}
with
\begin{align}
a^{(i)}=\mathbf{g}_{1}^{(i)}\mathbf{w}^{(i)},   
\end{align}
and
\begin{align}
\mathbf{b}^{(i)}=a^{(i)}(\mathbf{g}_{1}^{(i)})^{H}+\mathbf{w}^{(i)},
\end{align}
where $a^{(i)}$ and $\mathbf{b}^{(i)}$ denote the values at the $i$-th iteration in the optimization process.
Similarly, the left hand of~\eqref{simple problemb} can be rewritten as $1+\left |\mathbf{g}_{2}\mathbf{w}\right |^2$, where $\mathbf{g}_{2}=\frac{ |\hat{\mathbf{\theta}}\hat{\mathbf{G}}_{W}|}{\sigma_{W}}$.
To facilitate the reformulation of the problem, we further employ two slack variables that satisfy 
\begin{align}
\label{slack1}
t\geq|\Re\{\mathbf{g}_{2}\mathbf{w}\}|,
\end{align}\begin{align}
\label{slack2}
\bar{t}\geq\:|\Im\{\mathbf{g}_{2}\mathbf{w}\}|.
\end{align}
Since $\left |\mathbf{g}_{2}\mathbf{w}\right |^2=|\Re\{\mathbf{g}_{2}\mathbf{w}\}|^2+|\Im\{\mathbf{g}_{2}\mathbf{w}\}|^2\leq t^2+\bar{t}^2$,~\eqref{simple problemb} can be given by
\begin{align}
\label{slack_cons}
t^2+\bar{t}^2\leq f^{-1}(2\varepsilon ^2)-1.   
\end{align}
Then, we can convert~\eqref{slack1} into two constraints~\eqref{slack11}~and~\eqref{slack12}.
\begin{figure*}[!t]
\normalsize
\begin{align}
t &\geq \frac{1}{4}\left\|\mathbf{g}_{2}^{H}+\mathbf{w}\right\|^{2}
      + \frac{1}{4}\left\|\left(\mathbf{g}_{2}^{(n)}\right)^{H}-\mathbf{w}^{(n)}\right\|^{2}
      - \frac{1}{2}\Re\left\{\left(\mathbf{g}_{2}^{(n)}-\left(\mathbf{w}^{(n)}\right)^{H}\right)
      \left(\mathbf{g}_{2}^{H}-\mathbf{w}\right)\right\}
      \label{slack11} \\
t &\geq \frac{1}{4}\left\|\mathbf{g}_{2}^{H}-\mathbf{w}\right\|^{2}
      + \frac{1}{4}\left\|\left(\mathbf{g}_{2}^{(n)}\right)^{H}+\mathbf{w}^{(n)}\right\|^{2}
      - \frac{1}{2}\Re\left\{\left(\mathbf{g}_{2}^{(n)}+\left(\mathbf{w}^{(n)}\right)^{H}\right)
      \left(\mathbf{g}_{2}^{H}+\mathbf{w}\right)\right\}
      \label{slack12} \\
\bar{t} &\geq \frac{1}{4}\left\|\mathbf{g}_{2}^{H}-j\mathbf{w}\right\|^{2}
          + \frac{1}{4}\left\|\left(\mathbf{g}_{2}^{(n)}\right)^{H}+j\mathbf{w}^{(n)}\right\|^{2}
          - \frac{1}{2}\Re\left\{\left(\mathbf{g}_{2}^{(n)}-j\left(\mathbf{w}^{(n)}\right)^{H}\right)
          \left(\mathbf{g}_{2}^{H}+j\mathbf{w}\right)\right\}
          \label{slack21} \\
\bar{t} &\geq \frac{1}{4}\left\|\mathbf{g}_{2}^{H}+j\mathbf{w}\right\|^{2}
          + \frac{1}{4}\left\|\left(\mathbf{g}_{2}^{(n)}\right)^{H}-j\mathbf{w}^{(n)}\right\|^{2}
          - \frac{1}{2}\Re\left\{\left(\mathbf{g}_{2}^{(n)}+j\left(\mathbf{w}^{(n)}\right)^{H}\right)
          \left(\mathbf{g}_{2}^{H}-j\mathbf{w}\right)\right\}
          \label{slack22}
\end{align}

\end{figure*}
Similarly,~\eqref{slack2} can be replaced by~\eqref{slack21}~and~\eqref{slack22}.

Therefore, we derived an equivalent optimization problem for~\eqref{Problem}, given by
\begin{equation}
    \begin{aligned}
\underset{\mathbf{w},\mathbf{\theta}_l}{\text{max}}\quad&~\varpi+\varsigma \left [ 2\Re\left \{ \mathbf{\theta}_l^{(i)}\mathbf{\theta}_l^H\right \}-\left \|{\mathbf{\theta}_l^{(i)}}^H \right \|^2\right ]\label{socp}\\
\text {s.t.}\quad&\|\mathbf{w}\|^2 \leq P_S^{\max},|\mathbf{\theta}_l[m]|\leq1,\forall l,\forall m,\\
&~\eqref{slack_cons},~\eqref{slack11},~\eqref{slack12},~\eqref{slack21},~\eqref{slack22},\\
\end{aligned}
\end{equation}
where $\varsigma>0$ is the regularization parameter. 
Under the SCA framework, the optimal transmit beamforming vector and IRS phase shifts 
are obtained by iteratively solving the convex problem in~\eqref{socp}.

\section{Covert-oriented scheme based on detection error probability with warm-up initialization}
In this section, we further tighten the covertness modeling by constraining the detection error probability. In particular, we adopt a warm-start strategy, where the beamformer obtained from Algorithm~\ref{alg1} is used to initialize an SQP-based algorithm.
\subsection{Detection Performance at Willie}
In this subsection, we evaluate the performance of Willie with the detection error probability on the covert communications.
According to~\cite{8624357}, the false alarm and miss detection probabilities can be respectively derived as
\begin{align}
P_{\mathrm{FA}}
=
\mathrm{Pr}(\bar{P}_{W}>\tau|\mathcal{H}_{0})
=
1-
\frac{
\Gamma_{\mathrm{l}}\left(N_{\mathrm{det}},\frac{N_{\mathrm{det}}\tau}{\sigma_{W}^{2}}\right)
}
{\Gamma(N_{\mathrm{det}})},
\end{align}
and
\begin{align}
P_{\mathrm{MD}}
=
\mathrm{Pr}(\bar{P}_{W}<\tau|\mathcal{H}_{1})
=
\mathbb{E}_{Y}
\left[
\frac{
\Gamma_{\mathrm{l}}\left(N_{\mathrm{det}},\frac{N_{\mathrm{det}}\tau}{\sigma_{W}^{2}+Y}\right)
}
{\Gamma(N_{\mathrm{det}})}
\right],
\end{align}
where $\Gamma_{\mathrm{l}}(a,x)=\int_{0}^{x}e^{-t}t^{a-1}{\rm d}t$ is the lower incomplete Gamma function, and $\Gamma(a)$ is the complete Gamma function, 
and $Y=\left |\sum_{l=1}^{L}\mathbf{h}_{I_{l},W}^{H}\mathbf{\Phi}_{l}\mathbf{H}^{H}_{S,I_{l}}\mathbf{w}\right |^2$.
The probability density function (PDF) of $Y(\theta)$ is given by Theorem~\ref{thm2}.

\begin{thm}\label{thm2}
Given that the true distribution of $Y$ can be approximated by the Gamma distribution
\begin{equation*}
Y(\theta)\dot{\sim}\text{Gamma}\bigl(\alpha(\theta),\beta(\theta)\bigr),
\end{equation*}
and the PDF of $Y$ is expressed as
\begin{align}\label{f_Y}
f_Y(y;\alpha,\beta)=\frac{y^{\alpha-1} }{\beta ^\alpha\Gamma (\alpha )}e^{-\frac{y}{\beta}},
\end{align}
where $\alpha>0$ is the shape parameter and $\beta>0$ is the scale parameter of the distribution.
\end{thm}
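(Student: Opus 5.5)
The plan is a second-order moment-matching argument. Theorem~\ref{thm2} only claims that $Y$ is well approximated by a Gamma law, so the real content is identifying $\alpha(\theta)$ and $\beta(\theta)$ and justifying the approximation. I would take them from the first two moments of $Y=|Z|^2$, where
\begin{align*}
Z=\sum_{l=1}^{L}\mathbf{h}_{I_l,W}^{H}\mathbf{\Phi}_l\mathbf{H}_{S,I_l}^{H}\mathbf{w}=\hat{\boldsymbol{\theta}}\hat{\mathbf{G}}_W\mathbf{w},
\end{align*}
and then set $\alpha=\mathbb{E}[Y]^2/\mathrm{Var}(Y)$ and $\beta=\mathrm{Var}(Y)/\mathbb{E}[Y]$. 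The PDF in \eqref{f_Y} then follows directly.

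First I would write $Z$ as a sum of terms, one per IRS $l$ and element $m$: $Z=\sum_{l,m}\sqrt{\mathcal{L}_l}\,g_{l,m}^{*}e^{j\varphi_l^m}\,\mathbf{h}_{S,I_l,m}^{H}\mathbf{w}$.
\begin{itemize}
\item By \eqref{channel1}, the satellite–IRS factor is a deterministic steering and path-loss term times a Nakagami-$m$ amplitude.
\item The IRS–Willie factor $g_{l,m}$ is Rician, i.e.\ the LoS part $\sqrt{K/(K+1)}\,\mathbf{a}_{I_l,W}$ plus the Rayleigh part $\sqrt{1/(K+1)}\,\widetilde{\mathbf{g}}$.
\end{itemize}
Conditioned on the Nakagami variables, $Z$ is complex Gaussian. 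Its mean $\mu_Z(\theta)$ comes from the LoS components, and its variance is
\begin{align*}
\sigma_Z^2(\theta)=\frac{1}{K+1}\sum_{l}\mathcal{L}_l\,\bigl\|\mathbf{H}_{S,I_l}^{H}\mathbf{w}\bigr\|^2,
\end{align*}
because the NLoS entries are i.i.d.\ $\mathcal{CN}(0,1)$ and the unit-modulus phases drop out of the variance. Conditionally, $Y$ is therefore a scaled noncentral $\chi^2_2$ variable with known moments:
\begin{align*}
\mathbb{E}[Y\mid\cdot]&=|\mu_Z|^2+\sigma_Z^2,\\
\mathbb{E}[Y^2\mid\cdot]&=|\mu_Z|^4+4|\mu_Z|^2\sigma_Z^2+2\sigma_Z^4.
\end{align*}
I would then average over the Nakagami amplitudes using $\mathbb{E}[z^{2}]=\Omega$ and $\mathbb{E}[z^{4}]=\Omega^2(1+1/m)$, together with the independence of the amplitudes across antennas and IRSs. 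This gives closed forms for $\mathbb{E}[Y]$ and $\mathbb{E}[Y^2]$ as functions of $\theta$ and $\mathbf{w}$.

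To justify using a Gamma law rather than the exact distribution, I would argue as follows. $Y$ is nonnegative. When $LM$ is large, $Z$ is a sum of many independent terms and is close to a complex Gaussian by the CLT, so $Y$ is close to a noncentral $\chi^2_2$ (Rician-power) variable. It is a standard fact that a Gamma law with matched first two moments approximates such a variable well; this is the Nakagami–Rice correspondence, with the $K$ factor mapped to the shape. So the statement follows with $\alpha(\theta)=\mathbb{E}[Y]^2/(\mathbb{E}[Y^2]-\mathbb{E}[Y]^2)$ and $\beta(\theta)=(\mathbb{E}[Y^2]-\mathbb{E}[Y]^2)/\mathbb{E}[Y]$.

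\textbf{Main obstacle.} The hard step is the fourth-moment computation of the LoS part. $|\mu_Z|^4$ involves the coherent sum $\sum_{l,m}$ of Nakagami-weighted phasors, and after averaging it produces cross terms over pairs and quadruples of indices. I would handle this by expanding $|\mu_Z|^2=\sum\sum c_i c_k^{*}z_i z_k$ and keeping only index pairings with nonzero expectation, as in an Isserlis-type bookkeeping. If this becomes unwieldy, I would fall back on conditioning on the satellite-side fading, matching conditional moments, and treating $\alpha$ and $\beta$ as effectively deterministic under the free-space assumption $m\gg1$.
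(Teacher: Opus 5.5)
Your proposal is correct in the same heuristic sense as the paper's proof. It uses the same moment-matching rule $\alpha=\mathbb{E}[Y]^2/\mathrm{Var}(Y)$, $\beta=\mathrm{Var}(Y)/\mathbb{E}[Y]$, but you obtain the two moments by a genuinely different route.

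The paper absorbs the IRS--Willie channel into deterministic coefficients $\mathcal{K}_i$ and keeps only the satellite-side Nakagami fading random. It then \emph{imposes} cascaded phase alignment $\varphi_{z_i}+\varphi_{\mathcal{K}_i}=0$ at Willie, so $Y=\bigl(\sum_{i=1}^{NL}|z_i||\mathcal{K}_i|\bigr)^2$ is the square of a positive real sum. It applies the CLT to that sum and treats it as a real Gaussian with $\mu_S=\mu\sum_i|\mathcal{K}_i|$ and $\sigma_S^2=(\Omega-\mu^2)\sum_i|\mathcal{K}_i|^2$. The real-Gaussian fourth moment then gives $\mathrm{Var}(Y)=2\sigma_S^4+4\mu_S^2\sigma_S^2$.

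You instead keep the complex structure and the Rician randomness of the IRS--Willie link. You use the exact conditional law of $Y$, a scaled noncentral $\chi^2_2$ with conditional variance $2|\mu_Z|^2\sigma_Z^2+\sigma_Z^4$, and then average exactly over the Nakagami amplitudes.

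What each route buys:
\begin{itemize}
\item The paper's closed forms depend only on $|\mathcal{K}_i|$ and on $\mu_{|z_i|}(1),\mu_{|z_i|}(2)$, which keeps the SQP of Section~IV cheap. The price is that the coherent sum only upper-bounds $|\mathcal{Q}|$, since the designed phases in \eqref{theta_mapping} align the paths at Bob, not at Willie. The CLT is also invoked with only $NL=8$ terms in the simulated setup.
\item Your route is more faithful to the stated channel model and rests on the standard Rice--Nakagami correspondence. However, it does not reproduce \eqref{alpha}--\eqref{beta}, and it yields multi-index sums over complex coefficients.
\item Your fallback of conditioning on the satellite-side fading is the cleanest variant. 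It gives exactly $\alpha=(|\mu_Z|^2+\sigma_Z^2)^2/(2|\mu_Z|^2\sigma_Z^2+\sigma_Z^4)$, which is the Rice-to-Nakagami shape $(\kappa+1)^2/(2\kappa+1)$ with $\kappa=|\mu_Z|^2/\sigma_Z^2$. It is also consistent with $\hat{\boldsymbol{\theta}}$ being computed from Bob's cascaded CSI, which itself depends on $\mathbf{z}$.
\end{itemize}

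Two details need correcting.

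First, averaging with only $\mathbb{E}[z^2]=\Omega$ and $\mathbb{E}[z^4]=\Omega^2(1+1/m)$, and discarding unpaired indices, is legitimate only if the satellite-side coefficients are circularly symmetric. In the paper, the randomness sits in nonnegative Nakagami amplitudes multiplying deterministic steering phases. Then every index pattern in $\mathbb{E}[\sigma_Z^2]$, $\mathbb{E}|\mu_Z|^2$ and $\mathbb{E}|\mu_Z|^4$ survives, and you also need $\mu_{|z_i|}(1)$ and $\mu_{|z_i|}(3)$ from \eqref{mu}.

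Second, the $LM$ terms of $Z$ are not unconditionally independent, because all elements of an IRS share the same $N$ amplitudes. Your CLT remark should therefore be replaced by the exact conditional Gaussianity you already have. What the Gamma law then approximates is the Nakagami mixture of noncentral $\chi^2_2$ laws.
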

\begin{IEEEproof}
See Appendix~\ref{proof of thm2}.
\end{IEEEproof}
Note that $\alpha$ and $\beta$ can be calculated using~\eqref{alpha} and~\eqref{beta}.

In this context, we introduce
\begin{align*}
\mathbf{c}_{S,I_l}
&=\lambda \frac{\sqrt{G_S}}{4\pi d_{S,I_l}}\mathbf{a}_{S,I_l},\\
\mathbf{J}_{I_l,W}
&=\mathrm{diag}(\mathbf{c}_{S,I_l})\mathbf{H}_{I_l,W}^{(f)},
\end{align*}
where
\begin{align*}
\mathbf{H}_{I_l,W}^{(f)} \triangleq
\big[\mathbf{h}_{I_l,W}(f_1),\,\mathbf{h}_{I_l,W}(f_2),\,\ldots,\,\mathbf{h}_{I_l,W}(f_N)\big].
\end{align*}
Thus it holds that
\begin{align}
\mathcal{K}_i=|\mathbf{w}\odot(\mathbf{J}_{I_l,W}\theta_l)|.   
\end{align}
To further simplify calculations, $\hat{\mathbf{J}_{W}}$ is set as
\begin{align}
\begin{bmatrix}
    \mathbf{J}_{I_1,W} &        & 0 \\
        & \ddots &   \\
    0   &        & \mathbf{J}_{I_l,W}\\
\end{bmatrix},
\end{align}
and then $\mathcal{K}_i$ can be calculated by
\begin{align}
\mathcal{K}_i=
\left |\begin{bmatrix}
\mathbf{w}\\
\mathbf{w}
\end{bmatrix}\odot(\hat{\mathbf{J}_{W}}\hat{\mathbf{\theta}}^T)\right |_i.
\end{align}
Accordingly, the optimal detection threshold can be determined by solving $\frac{\partial P_e}{\partial \tau }=0$, which yields
\begin{equation}{\label{detection_threshold_temp}}
    \begin{aligned}
\frac{e^{-N_{\mathrm{det}}\tau^*/\sigma_W^2}}{\sigma_W^{2N_{\mathrm{det}}}}
=
\mathbb{E}_Y
\left[
\frac{e^{-N_{\mathrm{det}}\tau^*/(\sigma_W^2+Y)}}{(\sigma_W^2+Y)^{N_{\mathrm{det}}}}
\right], 
\end{aligned}
\end{equation}
where $\mathbb{E}_Y[\cdot]$ denotes the expectation with respect to $Y$.

However, (\ref{detection_threshold_temp}) is analytically intractable. To address this challenge, we utilize the Gaussian-Chebyshev quadrature to express the probability of error detection as~\eqref{Pe_GC}.
\begin{figure*}[ht]
	\begin{equation}\label{Pe_GC}
        \begin{aligned}
		P_{e}
        &=\:1-\frac1{\Gamma(N_{\mathrm{det}})}
        \left[
        \Gamma_{\mathrm{l}}\left(N_{\mathrm{det}},\frac{N_{\mathrm{det}}\tau}{\sigma_W^2}\right)
        -\mathbb{E}_{Y}
        \left[
        \Gamma_{\mathrm{l}}\left(N_{\mathrm{det}},\frac{N_{\mathrm{det}}\tau}{Y+\sigma_W^2}\right)
        \right]
        \right]\\
        &\overset{(a)}{\operatorname*{=}}
        1-\frac1{\Gamma(N_{\mathrm{det}})}
        \left[
        \Gamma_{\mathrm{l}}\left(N_{\mathrm{det}},\frac{N_{\mathrm{det}}\tau}{\sigma_W^2}\right)
        -\int_0^{\frac\pi2}
        \Gamma_{\mathrm{l}}\left(N_{\mathrm{det}},\frac{N_{\mathrm{det}}\tau}{\tan\vartheta+\sigma_W^2}\right)
        \frac{f_{Y}(\tan\vartheta)}{\cos^2\vartheta}\rm d \vartheta
        \right]\\
        &\overset{(b)}{\operatorname*{=}}
        1-\frac1{\Gamma(N_{\mathrm{det}})}
        \left[
        \Gamma_{\mathrm{l}}\left(N_{\mathrm{det}},\frac{N_{\mathrm{det}}\tau}{\sigma_W^2}\right)
        -\frac{\pi}{N_\text{quad}}\sum_{i=1}^B
        \Gamma_{\mathrm{l}}\left(N_{\mathrm{det}},\frac{N_{\mathrm{det}}\tau}{\tan\vartheta_i+\sigma_W^2}\right)
        \times
        \frac{f_{Y}(\tan\vartheta_i)\sqrt{\vartheta_i\left(\frac\pi2-\vartheta_i\right)}}{\cos^2\vartheta_i}
        \right].
        \end{aligned}
	\end{equation}
        \hrulefill
\end{figure*}
In~\eqref{Pe_GC}, step $(a)$ corresponds to the variable substitution $Y=\tan\vartheta$, while step $(b)$ applies the Gauss-Chebyshev quadrature, where $N_\text{quad}$ represents the number of summation terms controlling the complexity of precision.
The parameter of the quadrature is given by $\vartheta_i=\frac\pi4(1+\cos\frac{(2i-1)\pi}{2N_\text{quad}})$.
Subsequently, by setting the first-order derivative of $P_{e}$ with respect to the detection threshold to zero, the optimal detection threshold can be obtained by
\begin{equation}\label{tau*}
\begin{aligned}
    &\tau^*=-\frac{\sigma_W^2}{N_{\mathrm{det}}} \times
    \\&\ln\left[
    \frac{\pi}{N_\text{quad}}
    \sum_{i=1}^B
    \frac{
    f_{Y}(\tan\vartheta_i)
    \sqrt{\vartheta_i\left(\frac{\pi}{2}-\vartheta_i\right)}
    e^{-\frac{N_{\mathrm{det}}\tau^*}{\tan\vartheta_i+\sigma_W^2}}
    }
    {
    \left(\tan\vartheta_i/\sigma_W^2+1\right)^{N_{\mathrm{det}}}
    \mathrm{cos}^2\vartheta_i
    }
    \right].
\end{aligned}
\end{equation}
Substituting~\eqref{tau*} into~\eqref{Pe_GC}, the optimal detection error probability can be expressed as~\eqref{errorp}.

\begin{figure*}
\begin{equation}\label{errorp}
    \begin{aligned}
     P_{e}^*
     &=1-\frac1{\Gamma(N_{\mathrm{det}})}
     \left[
     \Gamma_{\mathrm{l}}\left(N_{\mathrm{det}},\frac{N_{\mathrm{det}}\tau^*}{\sigma_W^2}\right)
     -\frac{\pi}{N_\text{quad}}\sum_{i=1}^B
     \Gamma_{\mathrm{l}}\left(N_{\mathrm{det}},\frac{N_{\mathrm{det}}\tau^*}{\tan\vartheta_i+\sigma_W^2}\right)
     \times
     \frac{f_{Y}(\tan\vartheta_i)\sqrt{\vartheta_i\left(\frac\pi2-\vartheta_i\right)}}{\cos^2\vartheta_i}
     \right].
    \end{aligned}
\end{equation}
\hrulefill
\end{figure*}

\subsection{Optimization Problem Formulation}
Based on the above analysis, the optimization problem can be formulated as 
\begin{equation}
   \begin{aligned}
\underset{\mathbf{w}}{\text{max}}\quad&~R_B\big(\mathbf{w}, \mathbf{\Theta}(\mathbf{w})\big)\label{ao_error} \\
\text {s.t.}\quad&~\|\mathbf{w}\|^2\leq P_S^{max}, {P_{e}}^*(\mathbf{w})\geq 1-\varepsilon.
\end{aligned} 
\end{equation}
Due to the imposed cascaded phase-alignment rule, the IRS phase-shift matrix $\mathbf{\Theta}(\mathbf{w})$ is uniquely determined by the transmit beamformer and inherently satisfies the unit-modulus property $|\theta_l[m]| = 1, \forall l,m$, so that the detection error probability $P_e^*$ can be regarded as a function of $\mathbf{w}$ only.

\subsection{Algorithm Design}
Under the cascaded phase-alignment rule, the phase-aligned IRS reflection vector associated with a given transmit beamformer $\mathbf{w}$ is given by
\begin{align}
\hat{\boldsymbol{\theta}}^{\star}(\mathbf{w})
= \exp\!\big(-j\,\arg\{\hat{\mathbf{G}}_B \mathbf{w}\}\big),
\label{theta_mapping}
\end{align}
where the exponential and argument operators are applied element-wise. Equivalently, the phase of the $m$-th reflecting element of the $l$-th IRS is
\begin{align}
\theta_{l}^{\star}[m](\mathbf{w})
= \exp\!\big(-j\,\arg\{[\mathbf{G}_{Bl}\mathbf{w}]_{m}\}\big),
~\forall\,l,m,
\label{theta_mapping_element}
\end{align}
where $\mathbf{G}_{Bl}$ denotes the cascaded channel corresponding to the $l$-th IRS and $[\cdot]_m$ extracts its $m$-th entry.
With \eqref{theta_mapping}--\eqref{theta_mapping_element}, the IRS phase-shift matrix can be written as $\mathbf{\Theta}(\mathbf{w}) = \mathbf{\Theta}\big(\hat{\boldsymbol{\theta}}^{\star}(\mathbf{w})\big)$.

Substituting the deterministic mapping $\mathbf{\Theta}(\mathbf{w})$ into the covert rate and the covertness constraint, the design problem in \eqref{ao_error} reduces to a single-variable nonlinear program with respect to the transmit beamformer $\mathbf{w}$. Since $R_B\big(\mathbf{w}, \mathbf{\Theta}(\mathbf{w})\big)$ is monotonically increasing in Bob's receive SNR, \eqref{ao_error} can be equivalently written in the SNR-maximization form as
\begin{equation}
\label{snr_opt}
\begin{aligned}
\underset{\mathbf{w}}{\text{max}}\quad
& \gamma_B(\mathbf{w}) \triangleq
\frac{\big|\hat{\boldsymbol{\theta}}^{\star}(\mathbf{w})\hat{\mathbf{G}}_B\mathbf{w}\big|^2}{\sigma_B^2}
\\
\text{s.t.}\quad
& \|\mathbf{w}\|^2 \leq P_S^{\max},P_e^{\star}(\mathbf{w}) \geq 1-\varepsilon.
\end{aligned}
\end{equation}
Although \eqref{snr_opt} is non-convex, both the covert rate and the detection-error-probability constraint are smooth functions of $\mathbf{w}$. Therefore, we solve \eqref{snr_opt} by an SQP method, where in each iteration a quadratic approximation of the Lagrangian and linearized constraints are constructed around the current beamformer iterate, and the resulting quadratic program is solved to update $\mathbf{w}$. After each SQP update, the detection error probability $P_e^{\star}(\mathbf{w})$ is re-evaluated to explicitly verify the covertness constraint $P_e^{\star}(\mathbf{w}) \geq 1-\varepsilon$.

The overall procedure is summarized in Algorithm~\ref{alg:global_ao}. 
A warm-start strategy is adopted, where the warm-start beamformer $\mathbf{w}_{\text{warm}}$ is obtained from Algorithm~\ref{alg1} and then used to initialize the SQP iterations. For each initialization, Algorithm~\ref{alg:global_ao} iteratively updates the transmit beamformer $\mathbf{w}$ by solving the SQP subproblem associated with \eqref{snr_opt}, where in each iteration the IRS phase-shift matrix is implicitly given by the deterministic mapping $\mathbf{\Theta}(\mathbf{w})$ in \eqref{theta_mapping}--\eqref{theta_mapping_element}. The solution pair $(\mathbf{w},\mathbf{\Theta}(\mathbf{w}))$ that achieves the largest covert rate is selected as the final output.

\begin{algorithm}[t]
\caption{SQP-based Beamforming Algorithm for Multi-IRS-aided CPWC under Detection-error Constraint}
\begin{algorithmic}[1]\label{alg:global_ao}
\STATE \textbf{Initialize:}
\STATE Set the tolerance accuracy threshold $\delta$;
\STATE Set the maximum number of iterations (e.g., \texttt{maxIter}).
\STATE \textbf{Begin warm-start procedure:}
\STATE Obtain the warm-start beamformer $\mathbf{w}_{\text{warm}}$ from Algorithm~\ref{alg1};
\STATE \textbf{Initialization for SQP:}
\STATE $\mathbf{w}^{(0)} \leftarrow \mathbf{w}_{\text{warm}}$; 
\STATE Set iteration index $i \leftarrow 0$;
\STATE $R_{\text{old}} \leftarrow -\infty$,
       $R_{\text{new}} \leftarrow R_B\big(\mathbf{w}^{(0)},\mathbf{\Theta}(\mathbf{w}^{(0)})\big)$;
\STATE \textbf{Iterative SQP-based update:}
\WHILE{$|R_{\text{new}} - R_{\text{old}}| > \delta$ and $i < \texttt{maxIter}$}
    \STATE $i \leftarrow i + 1$; $R_{\text{old}} \leftarrow R_{\text{new}}$;
    \STATE Obtain $\mathbf{w}^{(i)}$ by solving problem \eqref{snr_opt} via SQP;
    \STATE Evaluate $P_e^{\star}(\mathbf{w}^{(i)})$ using the closed-form expression in Section~IV-A;
    \STATE Update $R_{\text{new}} \leftarrow R_B\big(\mathbf{w}^{(i)},\mathbf{\Theta}(\mathbf{w}^{(i)})\big)$;
\ENDWHILE
\STATE Select the converged beamformer $\mathbf{w}^{(i)}$ as the final solution.
\RETURN $\mathbf{w}^* \leftarrow \mathbf{w}^{(i)}$, $\hat{\boldsymbol{\theta}}^* \leftarrow \hat{\boldsymbol{\theta}}^{\star}(\mathbf{w}^*)$.
\end{algorithmic}
\end{algorithm}

In terms of computational complexity, the main computational burden of Algorithm~2 comes from the SQP update of $\mathbf{w}$ and the closed-form update of the IRS phases. For the $N$-dimensional beamformer, the SQP step has a complexity on the order of $\mathcal{O}(N^{3.5})$, while updating the IRS phase  only involves element-wise operations on $\hat{\mathbf{G}}_B\mathbf{w}$ with a complexity of $\mathcal{O}(LMN)$. Let $I_{\text{total2}}$ denote the number of outer iterations of Algorithm~2, then the overall computational complexity is approximately
$\mathcal{O}\big(I_{\text{total2}}(N^{3.5}+LMN)\big)$.

\section{Numerical Results}
In this section, numerical results are presented to evaluate the effectiveness of the proposed algorithms.
The main simulation parameters are summarized in Table~\ref{parameter}, which serves as the common configuration for all experiments. 

The satellite-related parameters in Table~I, including the altitude of the LEO satellite, central carrier frequency, bandwidth, the maximum satellite antenna gain, off-axis angle, and 3 dB beam-width angle of the antenna, are selected according to representative 6G NTN studies and satellite-terrestrial convergence technical references~\cite{wigard2023ubiquitous,mediatek2023satellite}.

Additionally, the satellite-IRS link is assumed to experience infrequent light shadowing (ILS) with parameters $(m,b,\Omega )=(10,0.158,1.29)$~\cite{wu2023covert}.
The noise power is computed using $\kappa_B$TB, according to the parameter values listed in Table~\ref{parameter}.
The coordinates of IRS $I_1$ and $I_2$ are set to $(-50~\rm m,-150~\rm m,50~\rm m)$ and $(-100~\rm m,-200~\rm m,30~\rm m)$, respectively.

The runtime results in the convergence analysis are measured on a computer
equipped with an Intel Core Ultra 9 275HX processor.

\renewcommand{\arraystretch}{1.2}
\begin{table}
\caption{Simulation parameters}
\label{parameter}
\centering
\begin{tabularx}{0.45\textwidth}{lX}  
\hline                      
Physical meaning             & Value \\
\hline
Number of satellite antennas &$N=4$\\
Number of IRS elements & $M_x\times M_z=4\times4$   \\
Number of IRS       & $L=2$                  \\
Satellite's location     & (-30 $\rm m$, -400 $\rm m$, 600 $\rm km$)              \\
Bob's location       & (0, 0, 0)             \\
Willie's location           & (0, 50 $\rm m$, 0)   \\
Central carrier frequency      & $f_c=3~\rm GHz$            \\
Number of subcarriers           & $N_s=1024$       \\
Bandwidth              & $B=5~\rm MHz$         \\
Boltzmann's constant         & $\kappa_B=1.38\times10^{-23}$~\text{J/K} \\ 
Noise temperature              & $T=300~\rm K$      \\
Maximum antenna gain of satellite  &$G_S^{\max}=49.5~\rm dBi$\\
Off-axis angle  & $\gamma_{\mathrm{off}}=\ang{0.2}$\\
3 \text{dB} beam-width angle
antenna  &$\gamma _{3dB}=\ang{3.2}$\\
Rician factor        & $K=10~\rm dB$      \\   
Path-loss at the reference distance                        & $\mathcal{L}_{0}=10^{-3}$      \\ 
Path loss exponent                           & $\zeta=2.2$            \\
\hline
\end{tabularx}
\end{table}

To validate the effectiveness of the proposed multi-IRS-aided CPWC schemes, including the AO--SDR Algorithm and the SQP-based beamforming design, we compare their performance against the following benchmark schemes:

\begin{itemize}
\item{Upper bound}: The upper bound on the covert rate is obtained from the objective value of problem~\eqref{Problem1.2.1}, which serves as a theoretical performance limit for Algorithm~\ref{alg1}.
\item{Random phase shift}: The phase shifts of all IRS elements are randomly set within $[0, 2\pi]$, and the transmit beamforming vector is optimized using the proposed Algorithm~\ref{alg:global_ao}.
\item{SOCP}: The SOCP-based algorithm described in Section~III is employed for joint optimization of the transmit beamforming and IRS phase shifts.
\item{1-bit IRS}: A practical scenario with discrete phase shifts is considered, where the IRS phase shifts obtained from Algorithm~\ref{alg:global_ao} are quantized via the NPP method. 
\item \begingroup\rightskip=0pt plus 1em
{Without PC~\cite{10035943}}: An IRS-assisted satellite-terrestrial covert communication scheme is implemented, where SDR is used to optimize IRS phase shifts under a path-loss-only channel model, and the single IRS in the without PC baseline is configured with the same total number of elements as all IRSs in our multi-IRS scheme to ensure a fair comparison.
\par\endgroup
 
\end{itemize}

Unless otherwise specified, the proposed scheme refers to Algorithm~\ref{alg:global_ao}, which uses the relative-entropy solution from Algorithm~\ref{alg1} as a warm start. Algorithm~\ref{alg:global_ao} refines the beamformer via SQP under the detection-error-probability constraint, yielding a higher covert rate than the SOCP benchmark.

Fig.~\ref{precise} demonstrates the 3D performance surface of the SNR versus the user location, where only the $\rm X$- and $\rm Y$-axis coordinates are considered, as both Bob and Willie are ground users.
It can be seen that the SNR reaches its maximum at Bob's position (0, 0, 0), forming a distinct energy peak.
In contrast, the SNR at Willie's location (0, 50 $\rm m$, 0) is as low as 0.3, thereby verifying the effectiveness of the proposed method in achieving covert precision communication.
Moreover, the SNR at (-50 $\rm m$, -100 $\rm m$, 0) reaches a value of 3, which can be attributed to the proximity of this location to IRS $I_2$, resulting in enhanced reflected signal strength.

\begin{figure}
\centering
\includegraphics[width=0.35\textwidth]{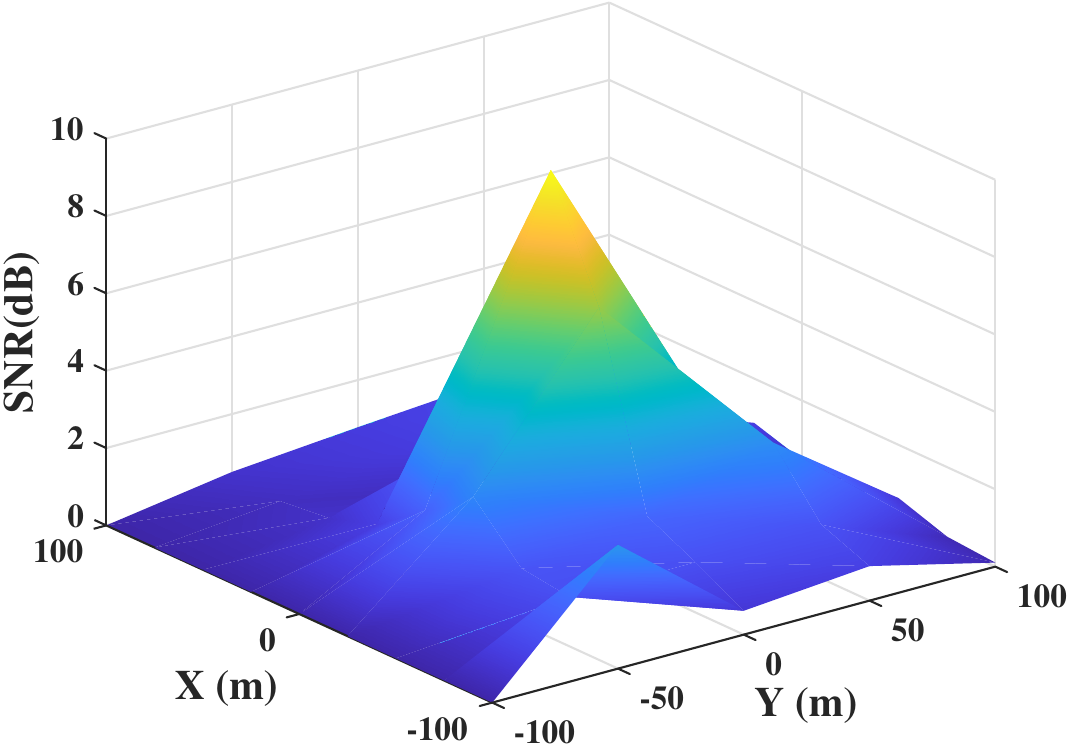}
\caption{3D performance surface of SNR versus the location coordinate under the proposed algorithm.}
\label{precise}
\end{figure}

Fig.~\ref{fig:convergence_behavior} shows the convergence behavior of the proposed SQP-based refinement scheme under different satellite transmit powers.
Algorithm~2 is initialized by the warm start solution generated from Algorithm~1 and further optimizes the beamformer under the detection-error-probability constraint.
For all considered transmit powers, the covert rate increases rapidly during the first few SQP iterations and reaches convergence after about four iterations.
After convergence, the final covert rate exhibits a clear dependence on the satellite transmit power. Specifically, a higher satellite transmit power leads to a higher converged covert rate, while the convergence speed remains almost unchanged across different power settings. Moreover, for Algorithm~2, the effective execution times are $3.35~\mathrm{s}$, $6.02~\mathrm{s}$, $5.89~\mathrm{s}$, and $4.83~\mathrm{s}$ under $P_S^{\max}=49$, $51$, $53$, and $55~\mathrm{dBm}$, respectively.
The average effective execution time is about $5.02~\mathrm{s}$.
Together with the complexity analysis of Algorithm~2, the convergence and runtime results show that the proposed refinement scheme achieves fast convergence in terms of both the required iteration number and the measured execution time.

\begin{figure}[t]
    \centering
    \includegraphics[width=0.7\linewidth]{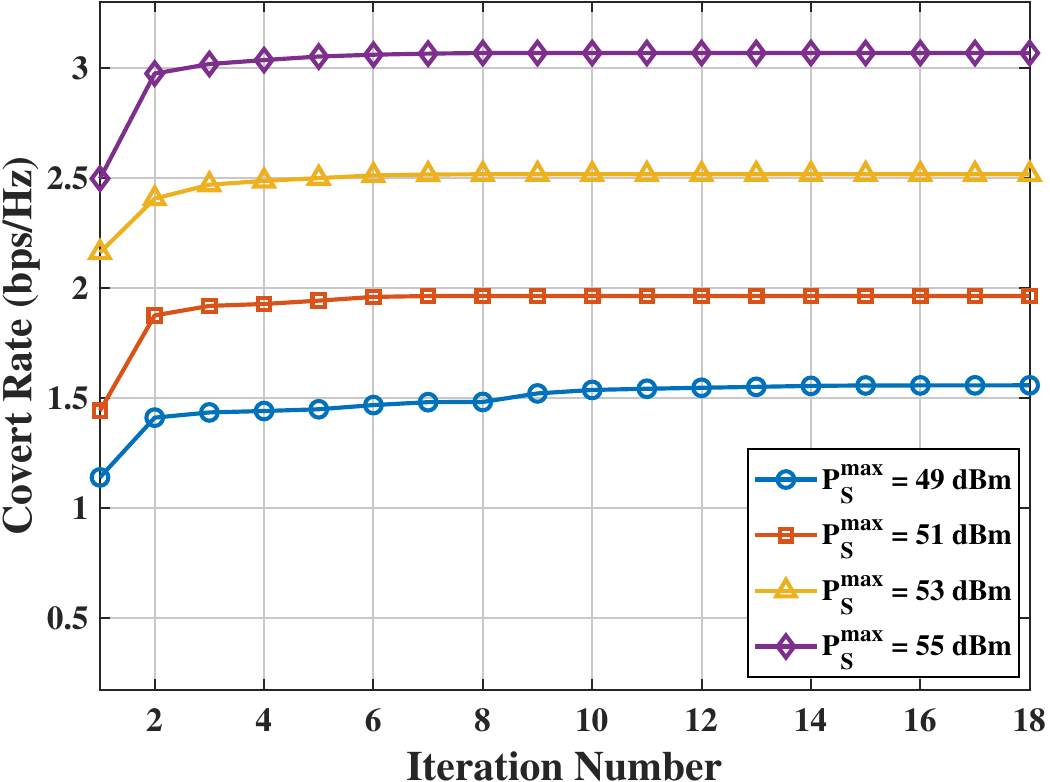}
    \caption{Convergence analysis of the proposed SQP-based refinement scheme under different satellite transmit powers.}
    \label{fig:convergence_behavior}
\end{figure}

\begin{figure*}[!t]
\centering
\begin{minipage}[t]{0.32\textwidth}
  \centering
  \includegraphics[width=\textwidth]{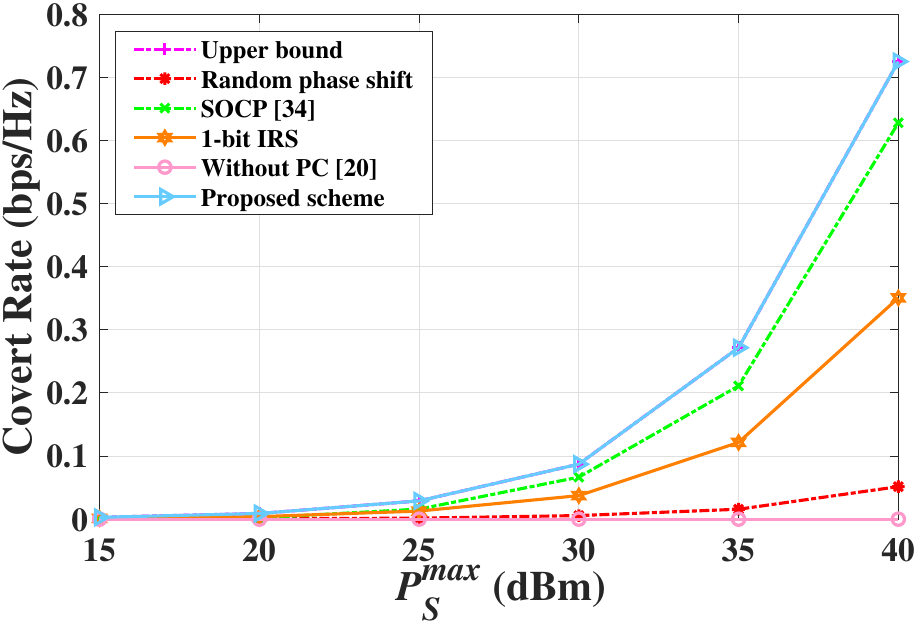}
  \caption{Covert rate versus the maximum transmit power $P_S^{\max}$ at the satellite with covertness performance $\varepsilon = 0.05$.}
  \label{Pa_rate}
\end{minipage}
\hfill
\begin{minipage}[t]{0.32\textwidth}
  \centering
  \includegraphics[width=\textwidth]{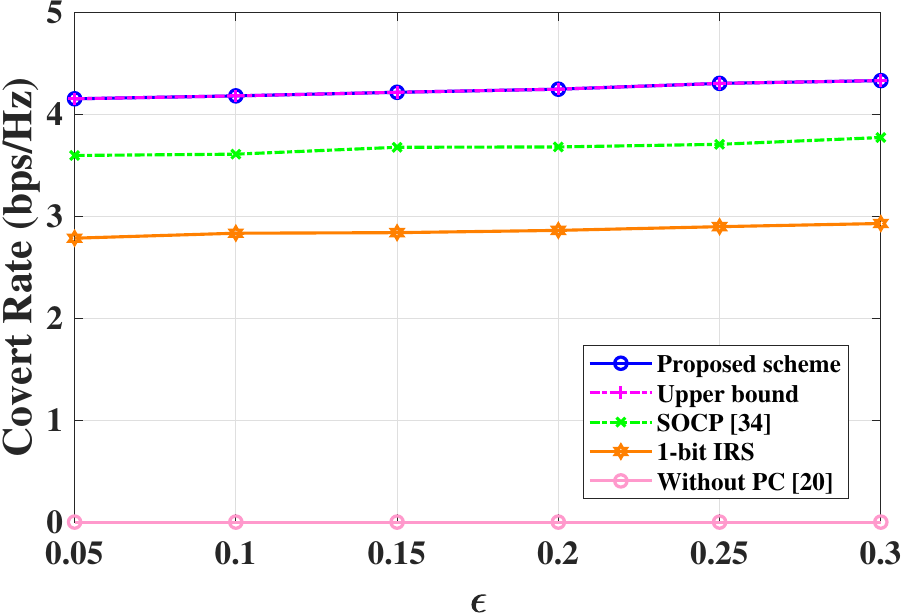}
  \caption{Covert rate versus covertness performance $\varepsilon$ with different $P_S^{\max}$.}
  \label{epsilon_rate}
\end{minipage}
\hfill
\begin{minipage}[t]{0.32\textwidth}
  \centering
  \includegraphics[width=\textwidth]{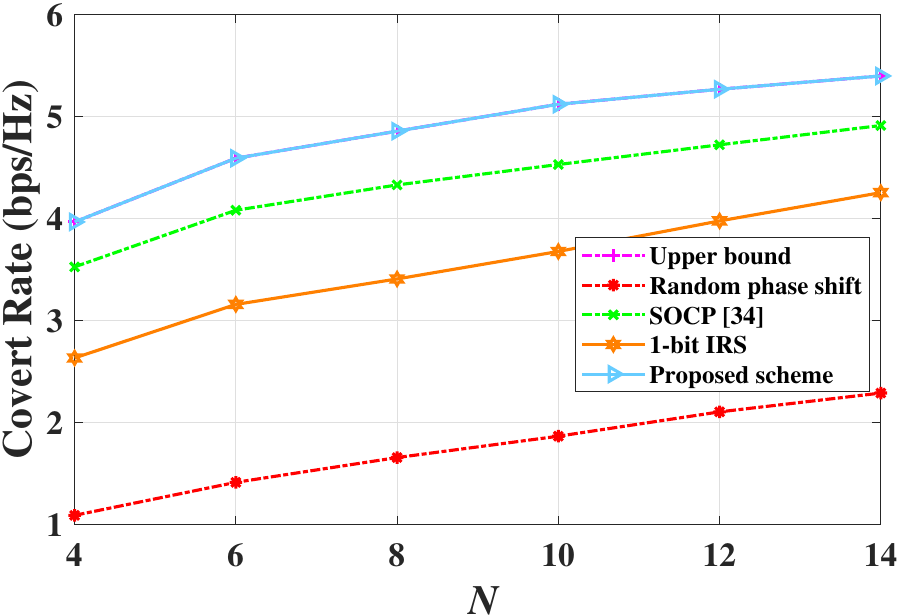}
  \caption{Covert rate versus the number of satellite antennas $N$ with $\varepsilon = 0.05$ and $P_S^{\max} = 55~\rm dBm$.}
  \label{N_rate}
\end{minipage}
\end{figure*}

Fig.~\ref{Pa_rate} illustrates the variation curves of the covert rate versus the maximal transmit power $P_S^{max}$ with $\varepsilon=0.05$, where different optimization schemes are compared. 
It can be observed that the covert rate monotonically increases with $P_S^{\max}$.
Compared with the random phase shift scheme, the covert rate achieved by other optimization schemes grows rapidly as $P_S^{max}$ increases from 30~$\text{dBm}$ to 40~$\text{dBm}$, since the absence of reflective beamforming results in a slower rate improvement.
Furthermore, the SDR upper bound of \eqref{Problem1.2.1} serves as a reference. Algorithm~\ref{alg1} achieves a performance close to this bound with Gaussian randomization, while Algorithm~\ref{alg:global_ao}, further improves the covert rate through SQP refinement under the detection-error-probability constraint.
Moreover, the covert rate decreases when a 1-bit IRS is applied instead of an ideal IRS, due to the lack of phase shift accuracy.
Finally, even when the number of IRS elements in the SDR scheme without PC is set equal to the total number of IRS elements in the proposed scheme, the covert rate achieved by the SDR scheme remains close to zero, thereby demonstrating the effectiveness of the proposed approach.

As shown in Fig.~\ref{epsilon_rate}, we compare the covert rate against the covertness performance parameter $\varepsilon$ across various optimization schemes.
The covert rate is observed to increase as $\varepsilon$ increases.
This trend occurs because a higher $\varepsilon$ reduces the covertness constraint, allowing more energy to be directed to the legitimate receiver while still ensuring covert communication that remains undetectable by Willie.
Moreover, while our proposed method achieves a higher covert rate than the SOCP method as $P_S^{\max}$ decreases, the gap between them narrows significantly, indicating that our algorithm achieves superior performance when $P_S^{\max}$ is large.

In Fig.~\ref{N_rate}, the impact of the number of satellite antennas $N$ on the covert rate is investigated.
It can be observed that the performance of all schemes exhibits a consistent increasing trend as $N$ grows.
This improvement stems from the enhanced ability of multiple antennas to concentrate signal power in desired directions, thereby improving Bob’s receiving SNR.
On the other hand, by comparing Fig.~\ref{N_rate} with Fig.~\ref{Pa_rate}, it is evident that increasing the number of transmitter antennas is more effective than simply increasing the transmitter power in improving the covert rate, which provides new insights for improving the covert rate under power-constrained conditions.
It is also noted that the rate of performance improvement diminishes once the number of antennas exceeds a certain threshold.

Fig.~\ref{M_rate} shows the impact of the number of elements $M$ in each IRS on the covert rate.
It can be observed that the covert rate increases with $M$, since a large number of elements enables more coherent superposition of reflection paths, thereby enhancing the beamforming gain towards Bob.
Moreover, the growth rates of the different schemes exhibit noticeable differences.
In particular, the 1-bit IRS scheme demonstrates the steepest growth, and when $M$ reaches 64, the covert rate approaches that obtained by the SOCP scheme.
Therefore, in practice, we can compensate for the lack of discrete IRS phase shift accuracy by increasing the number of IRS elements.
Moreover, the performance of the proposed schemes consistently outperforms the benchmarks, achieving an improvement of $3.32\%$ over the SOCP scheme and $3.97\%$ over the 1-bit IRS scheme.

\begin{figure*}[!t]
\centering
\begin{minipage}[t]{0.32\textwidth}
  \centering
  \includegraphics[width=\textwidth]{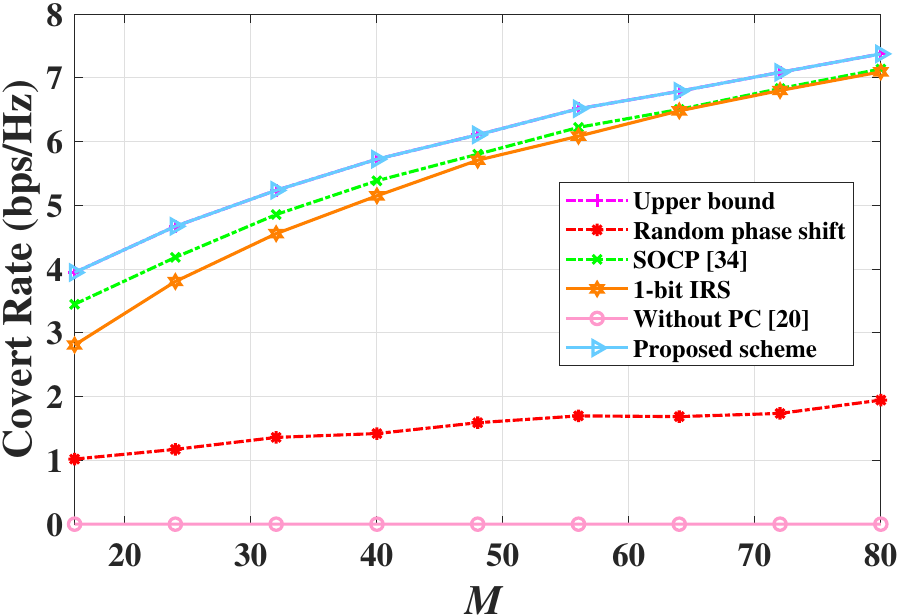}
  \caption{Covert rate versus the number of elements $M$ in each IRS with $\varepsilon = 0.05$ and $P_S^{\max} = 55~\rm dBm$.}
  \label{M_rate}
\end{minipage}
\hfill
\begin{minipage}[t]{0.32\textwidth}
  \centering
  \includegraphics[width=\textwidth]{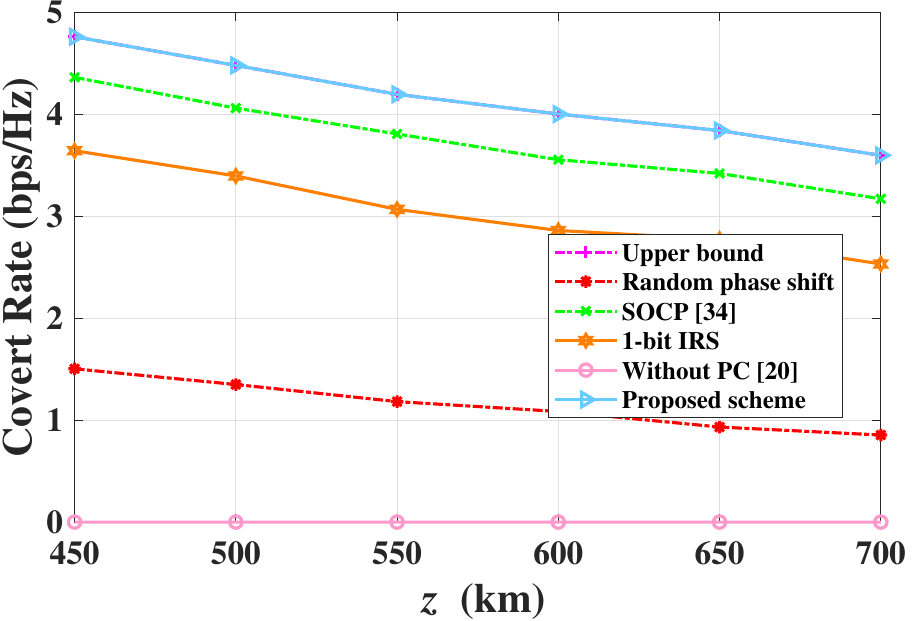}
  \caption{Covert rate versus satellite altitude $z_s$ with $\varepsilon = 0.05$ and $P_S^{\max} = 55~\rm dBm$.}
  \label{h_rate}
\end{minipage}
\hfill
\begin{minipage}[t]{0.32\textwidth}
  \centering
  \includegraphics[width=\textwidth]{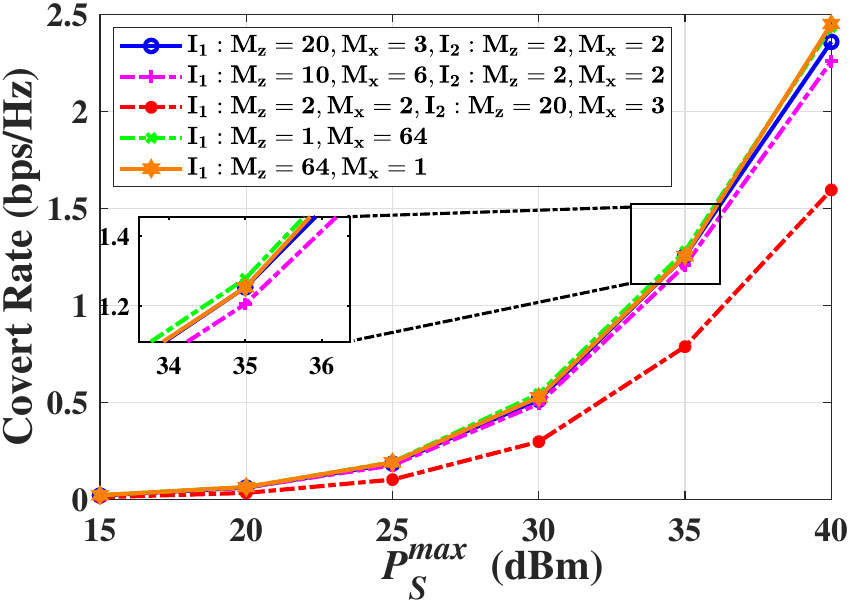}
  \caption{Covert rate versus the maximum transmit power $P_S^{\max}$ with $\varepsilon = 0.05$ and various IRS geometric sizes.}
  \label{IRS}
\end{minipage}
\end{figure*}

Fig.~\ref{h_rate} illustrates the covert rate versus $z_s$ when $\varepsilon=0.05$ and $P_S^{max}=55~\rm dBm$.
As expected, the covert rate monotonically decreases with $z_s$.
This is because the channel gain from the satellite to the ground receiver is dominated by the path loss, where $z_s$ determines the propagation distance. 
An increase in $z_s$ leads to higher path loss, thereby reducing the received signal power.
Moreover, we observe that the rate of decline in the covert rate initially slows down and subsequently accelerates as $z_s$ increases.
Quantitatively, in the proposed Algorithm~\ref{alg:global_ao}, the covert rate decreases by an average of $0.23262~\rm bps/Hz$ for every $50~\rm km$ increase in altitude.
In comparison, the SOCP scheme exhibits an average covert rate decrease of $0.23826~\rm bps/Hz$ for every $50~\rm km$, while the 1-bit IRS scheme experiences a smaller decrease of $0.22206~\rm bps/Hz$ per $50~\rm km$ increase in altitude.

In Fig.~\ref{IRS}, we investigate the impact of the maximum transmit power $P_S^{max}$ on the covert rate under different IRS geometries with $\varepsilon=0.05$.
First, we observe that for a single ULA-structured IRS with the same number of elements, the covert rate remains nearly constant.
Secondly, when the number of elements in $I_{1}$ is set to 60 and that in $I_{2}$ to 4, the covert rate increases with $P_S^{\max}$, since the element arrangement of $I_{1}$ gradually approximates a ULA, thereby providing stronger beamforming gain.
By contrast, when $I_{2}$ is configured with 60 elements while $I_{1}$ is limited to 4, the covert rate is significantly reduced compared to the previous case.
This is due to the fact that $I_{1}$ is positioned closer to the satellite and the legitimate receiver than $I_{2}$, thereby yielding a higher power gain in the former configuration.

\begin{figure}[t]
    \centering
    \includegraphics[width=0.8\linewidth]{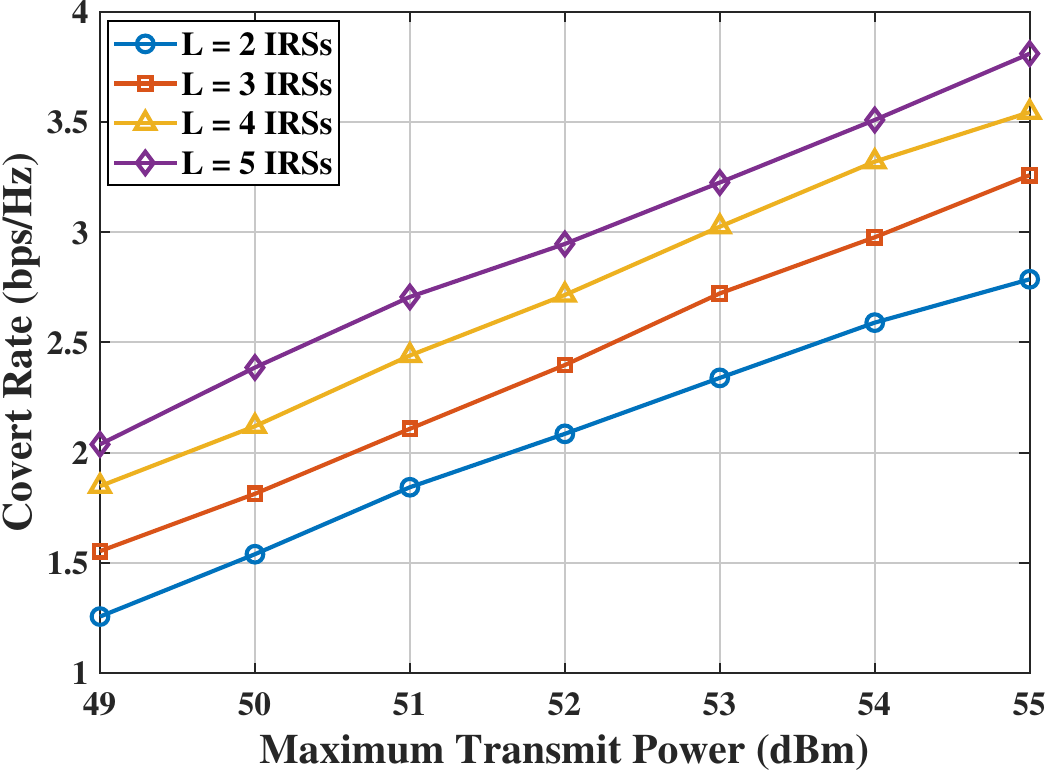}
    \caption{Covert rate versus maximum satellite transmit power under different numbers of distributed IRSs.}
    \label{fig:multi_irs_cr}
\end{figure}

Fig.~\ref{fig:multi_irs_cr} shows the covert rate versus the maximum satellite transmit power under different numbers of distributed IRSs.
For all considered transmit-power values, a larger number of IRSs yields a higher covert rate.
As $P_S^{\max}$ increases from $49~\mathrm{dBm}$ to $55~\mathrm{dBm}$, the covert rate increases monotonically for all considered IRS numbers.
Specifically, the covert-rate range shifts upward from about $1.25$-$2.04~\mathrm{bps/Hz}$ at $49~\mathrm{dBm}$ to about $2.78$-$3.82~\mathrm{bps/Hz}$ at $55~\mathrm{dBm}$ when $L$ increases from $2$ to $5$.
The performance improvement comes from the additional controllable reflecting paths and larger passive reflecting aperture provided by more distributed IRSs.
With a larger $L$, the joint beamforming and IRS phase-shift optimization can enhance Bob-side coherent signal combining while regulating the leakage toward Willie under the covertness constraint.
As a result, the proposed framework is not limited to the two-IRS case and can benefit from larger-scale distributed IRS deployment.

\begin{figure}[t]
    \centering
    \includegraphics[width=0.8\linewidth]{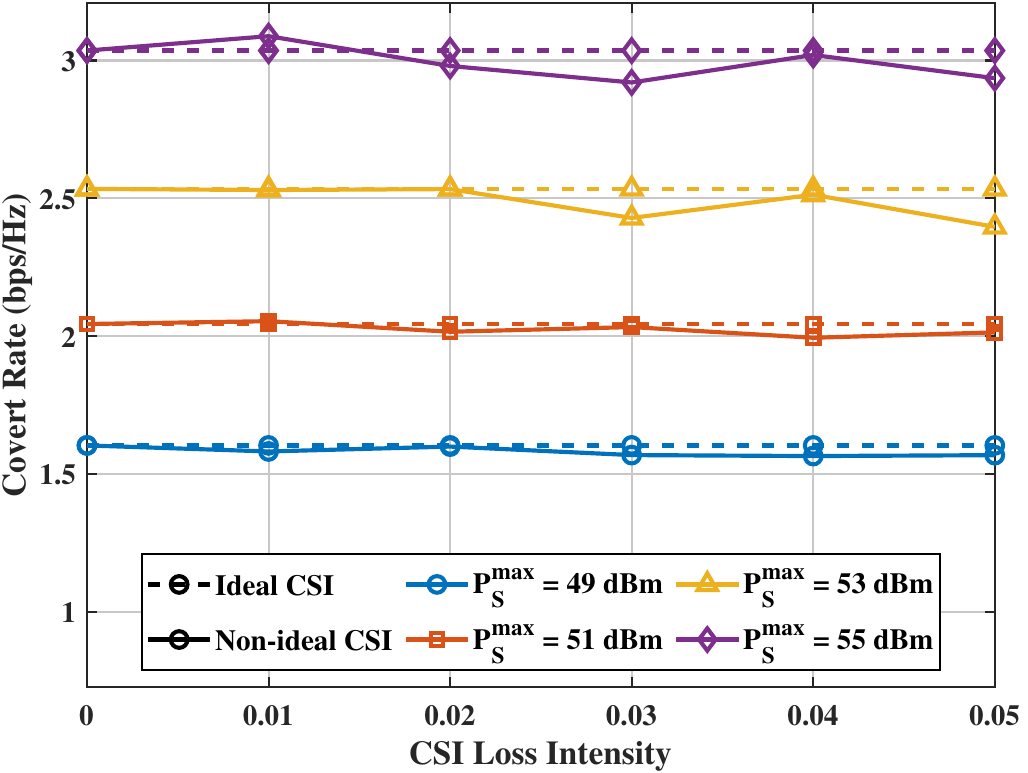}
    \caption{Covert rate under different CSI loss intensities.}
    \label{fig:csi_loss}
\end{figure}

For the non-ideal CSI case, channel estimation errors, feedback delay, and IRS reconfiguration latency may cause mismatch between the optimized beamforming and IRS phase-control coefficients and the actual cascaded channel, leading to Bob-side focusing loss and covert-rate reduction. As a result, a quantitative sensitivity experiment is added to evaluate the impact of non-ideal CSI on the proposed CPWC system.

Fig.~\ref{fig:csi_loss} evaluates the sensitivity of the proposed scheme to non-ideal CSI.
The dashed curves denote the ideal CSI benchmark, while the solid curves denote the non-ideal CSI case with different CSI loss intensities.
As the CSI loss intensity increases from $0$ to $0.05$, the covert rate remains close to the ideal CSI benchmark under all considered transmit powers.
For $P_S^{\max}=49~\mathrm{dBm}$ and $51~\mathrm{dBm}$, the non-ideal CSI curves almost overlap with the ideal CSI references, indicating negligible performance degradation under mild CSI mismatch.
For $P_S^{\max}=53~\mathrm{dBm}$ and $55~\mathrm{dBm}$, the non-ideal CSI curves exhibit slightly larger fluctuations, but the rate loss remains moderate and no obvious performance collapse occurs.
Moreover, the performance ordering under different transmit powers is preserved, since a larger $P_S^{\max}$ consistently yields a higher covert rate.
The above observations indicate that the proposed scheme maintains stable performance under mild CSI loss and provides a quantitative sensitivity evaluation beyond the ideal CSI benchmark.

\section{Conclusions}
In this paper, we have proposed a multi-IRS-aided satellite CPWC system, where an LEO satellite serves a legitimate ground user while mitigating detection attempts by a warden.
By integrating the OFDM-based RSCS and beamforming, the transmit energy has been effectively concentrated toward the legitimate receiver, thereby enhancing communication security.
Based on the detection error probability and relative entropy, the closed-form expressions for the covertness of the proposed CPWC system have been derived under Nakagami-m fading for the satellite–IRS link and Rician fading for the IRS–user link.
To maximize the covert rate, we have formulated an optimization problem under the CPWC covertness constraints and the satellite energy budget. First, an AO-based SDR iterative algorithm has been developed under the relative-entropy constraint to obtain a high-quality feasible warm start. Using this warm start, we have imposed the detection-error-probability constraint and applied an SQP-based refinement to further improve the covert rate.
Additionally, a discrete phase shift scheme for the IRS has been designed to extend the proposed system to practical implementations.
Numerical results validate the effectiveness of the proposed schemes in achieving CPWC and demonstrate the efficiency and superiority of the developed algorithm.

Future work will extend the terrestrial-IRS-aided CPWC framework toward a robust and delay-aware design by accounting for non-ideal cascaded CSI, IRS reconfiguration latency, and LEO mobility in the joint beamforming and IRS phase-control design. 
In addition, the CPWC framework can be expanded to satellite-side IRS deployment, where space implementation constraints need to be integrated with covert and precise transmission design to support practical on-orbit deployment.







\appendices
\section{Proof of Theorem~\ref{thm1}}\label{proof of thm1}
To prove Theorem~\ref{thm1}, we note that Problem~\eqref{Problem1.1.1} is convex and satisfies Slater's condition, so strong duality holds and the optimal solution can be obtained from its Lagrangian dual problem. According to~\eqref{Problem1.1.1}, the corresponding Lagrangian function can be expressed as
\begin{equation}
\begin{aligned}
 &\mathcal{L}\left(\mathbf{W}, \mathbf{X}, \alpha_1, \alpha_2\right) = -\text{tr}\left(\mathbf{C_b} \mathbf{W}\right) - \text{tr}\left(\mathbf{X} \mathbf{W}\right) \\
& + \alpha_1\left(\text{tr}\left(\mathbf{C_w} \mathbf{W}\right) - f^{-1}(2\varepsilon^2) + 1 \right) + \alpha_2\left( \text{tr}\left(\mathbf{W}\right) - P_s^{\max} \right),\\
\end{aligned}    
\end{equation}
where $\alpha_1$ and $\alpha_2$ represent Lagrange multipliers w.r.t. constraints~\eqref{Problem1.1.0a} and~\eqref{Problem1.1.0b}, respectively, and $\mathbf{X}$is Lagrange multiplier matrix of~\eqref{Problem1.1.0c}.
Accordingly, the Lagrangian dual problem can be expressed as
\begin{equation*}
\underset{\alpha_1\geq0,\alpha_2\geq0,\mathbf{X}\succeq0}{\text{max}}
\underset{\mathbf{W}}{\text{min}}\quad
\mathcal{L}\left(\mathbf{W}, \mathbf{X}, \alpha_1, \alpha_2\right),
\end{equation*}
which can be solved to obtain the optimal solution
\begin{equation*}
\left(\mathbf{W}^*, \mathbf{X}^*, \alpha_1^*, \alpha_2^*\right)
\end{equation*}
to the primal-dual problem, given by
\begin{align}
\left(\mathbf{W}^*, \mathbf{X}^*, \alpha_1^*, \alpha_2^*\right)=\text{arg} \left \{\underset{\alpha_1,\alpha_2,\mathbf{X}}{\text{max}}\underset{\mathbf{W}}{\text{min}}\quad\mathcal{L}\left(\mathbf{W}, \mathbf{X}, \alpha_1, \alpha_2\right) \right \} . 
\end{align}

According to~\cite{chi2017convex}, $\left(\mathbf{W}^*, \mathbf{X}^*, \alpha_1^*, \alpha_2^*\right)$ satisfies the Karush-Kuhn-Tucker (KKT) conditions, which are explicitly given as
\begin{equation}
\begin{aligned}
\mathrm{K1:} \quad & \nabla_{\mathbf{W}^*}\mathcal{L}\left(\mathbf{W}^*, \mathbf{X}^*, \alpha_1^*, \alpha_2^*\right)=0, \\
\mathrm{K2:} \quad & \eqref{Problem1.1.0a},\eqref{Problem1.1.0b},\eqref{Problem1.1.0c}, \\
\mathrm{K3:} \quad & \alpha_1^*\geq0,\alpha_2^*\geq0,\mathbf{X}^*\succeq0,\\
\mathrm{K4:} \quad & \alpha_1^*\left(\text{tr}\left(\mathbf{C_w} \mathbf{W}^*\right) - f^{-1}(2\varepsilon^2) + 1 \right)=0,\\
\mathrm{K5:} \quad & \alpha_2^*\left( \text{tr}\left(\mathbf{W}^*\right) - P_s^{\max} \right)=0,\\
\mathrm{K6:} \quad & \text{tr}\left(\mathbf{X}^* \mathbf{W}^*\right) = 0.
\end{aligned}
\end{equation}

In the following, we prove that \text{rank}$(\mathbf{W}^*)\leq1$ via analyzing the structure of $\mathbf{X}^*$.
First, $\mathrm{K1}$ can be re-expressed as
\begin{align}
\label{X^*}
 \mathbf{X}^*=\alpha_2^* \mathbf{I}_{N} - \left( \mathbf{C_b} - \alpha_1^*\mathbf{C_w} \right).  
\end{align}

Define $\mathbf{\Xi}^*=\mathbf{C_b} - \alpha_1^*\mathbf{C_w}$, and let the maximum eigenvalue of $\mathbf{\Xi}^*$ be denoted by $\lambda_{\text{max}}$.
From~\eqref{X^*} together with K3, the feasibility condition $\mathbf{X}^* \succeq 0$ requires $\alpha_2^* \geq \lambda_{\text{max}}$. Under this condition, all eigenvalues of $\mathbf{X}^*$ are nonnegative and at most one eigenvalue can be zero for random channel realizations~\cite{9183907}, which implies $\text{rank}(\mathbf{X}^*) \geq N-1$.
Moreover, K6 gives $\text{tr}\!\left(\mathbf{X}^* \mathbf{W}^*\right) = 0$. Since $\mathbf{X}^* \succeq 0$ and $\mathbf{W}^* \succeq 0$, the equality $\text{tr}\!\left(\mathbf{X}^* \mathbf{W}^*\right) = 0$ yields $\mathbf{X}^* \mathbf{W}^* = 0$, and the rank inequality leads to $\text{rank}(\mathbf{X}^*)+\text{rank}(\mathbf{W}^*)\leq N$, so that $\text{rank}(\mathbf{W}^*)\leq 1$.
Since $\mathbf{W}^*$ is the transmit covariance matrix, the trivial solution $\mathbf{W}^*=\mathbf{0}$ is infeasible in practice, and therefore $\text{rank}(\mathbf{W}^*)=1$, which completes the proof of Theorem~\ref{thm1}.

\section{Proof of Theorem~\ref{thm2}}\label{proof of thm2}
Considering that only the statistical CSI of $\mathbf{h}_{S, I_l}$ is available at Willie, we define the received power as $Y=\left |\sum_{l=1}^{L}\mathbf{h}_{I_{l},W}^{H}\mathbf{\Phi}_{l}\mathbf{H}^{H}_{S,I_{l}}\mathbf{w}\right |^2$.
To characterize the distribution of $Y$, we first derive the distribution of the intermediate variable $\mathcal{Q}=\sum_{l=1}^{L}\mathbf{h}_{I_{l},W}^{H}\mathbf{\Phi}_{l}\mathbf{H}^{H}_{S,I_{l}}\mathbf{w}$.

Let $[\cdot ]_i$ be the $i$-th element of a vector.
Substitute $\mathcal{Q}$ into~\eqref{channel1}, we denote $\mathcal{K}_i=\lambda \frac{\sqrt{G_S}}{4\pi d_{S,I_l}}[\mathbf{a}_{S,I_l}]_i[\mathrm{diag}(\mathbf{w})\mathbf{H}_{I_l,W}^H\theta_l]_i$.
Let $z_i=[\mathbf{z}]_i$, and $\mathcal{Q}$ can be expressed as
\begin{align}\label{S}
\mathcal{Q}=\displaystyle\sum_{l=1}^{L}\displaystyle\sum_{i=1}^{N}|z_i||\mathcal{K}_i|e^{j\varphi_{z_i} }e^{j\varphi_{\mathcal{K}_i} } , 
\end{align}
where $\varphi_{z_i}$ and $\varphi_{\mathcal{K}_i}$ represent the phases of $z_i$ and $\mathcal{K}_i$, respectively.

In multi-IRS-assisted systems, ideal cascaded phase alignment is commonly assumed~\cite{9205879}.
Specifically, we impose the cascaded phase-alignment constraint $\varphi_{z_i} + \varphi_{\mathcal{K}_i} = 0$ for each element $i$ by configuring the IRS phase-shift vector $\boldsymbol{\theta}_l$ as a function of the transmit beamformer $\mathbf{w}$, so that all reflected components are phase-aligned at the receiver.
Under the above phase-alignment assumption, the received signal power simplifies as
\begin{align}\label{Y}
Y=|\mathcal{Q}|^2=(\displaystyle\sum_{i=1}^{NL}|z_i||\mathcal{K}_i|)^2.  
\end{align}
Given $|z_i|\sim \text{Nakagami}(m,\Omega)$, the $k$-th moment of $|z_i|$, denoted as $\mu_{|z_i|}(k)\triangleq \mathbb{E}[|z_i|^k] $, is given by
\begin{align}\label{mu}
\mu_{|z_i|}(k)=\frac{\Gamma(m+k/2)}{\Gamma(m)}(\frac{m}{\Omega})^{-k/2}. 
\end{align}
From~\eqref{Y}, $Y$ consists of terms involving squared Nakagami-distributed variables and products of Nakagami-distributed variables. Since the square of a Nakagami variable follows a Gamma distribution and the product of two independent Nakagami variables can be approximated by a Gamma distribution~\cite{9558795}, the distribution of $Y$ can be approximated by a Gamma distribution.

In the following, we derive the parameters of $Y$, following the gamma distribution.
According to the Central Limit Theorem, when $NL$ is sufficiently large, the random variable $\mathcal{Q}$ can be approximated by a Gaussian distribution, i.e. $\mathcal{Q}\sim\mathcal{N}(\mu_S,\sigma_S^2)$.

To compute the mean and variance, we utilize the first and second moments of $|z_i$. Specifically, from~\eqref{mu}, we obtain the  $\mu_{|z_i|}(1)=\mathbb{E}[|z_i|]=\frac{\Gamma(m+0.5)}{\Gamma(m)}\sqrt{\frac{\Omega}{m}}$ and $\mu_{|z_i|}(2)=\Omega$, which yield the variance $\text{VAR}[|z_i|]=\Omega-\mu^2$, where $\mu=\mu_{|z_i|}(1)$.

Therefore, the expected value and variance of $|\mathcal{Q}|$ are then given by
\begin{align}
\mu_{|\mathcal{Q}|}= \mathbb{E}[|\mathcal{Q}|]=\displaystyle\sum_{i=1}^{NL}|\mathcal{K}_i|\mathbb{E}[|z_i|]=\displaystyle\sum_{i=1}^{NL}|\mathcal{K}_i|\mu,
\end{align}
and
\begin{align}
\sigma_{|\mathcal{Q}|}^2=\text{VAR}[|\mathcal{Q}|]=\displaystyle\sum_{i=1}^{NL}|\mathcal{K}_i|^2(\Omega-\mu^2).
\end{align}
Accordingly, the first two moments of $Y=|\mathcal{Q}|^2$ can be derived as
\begin{align}
\mathbb{E}[Y]= \mathbb{E}[|\mathcal{Q}|^2]=\mathbb{E}[|\mathcal{Q}|]^2+\text{VAR}[|\mathcal{Q}|]=\mu_{|\mathcal{Q}|}^2+\sigma_{|\mathcal{Q}|}^2,
\end{align}
and
\begin{align}\label{Vary}
\text{VAR}[Y]= \mathbb{E}[|\mathcal{Q}|^4] -\mathbb{E}[|\mathcal{Q}|^2]^2.
\end{align}
To compute $\mathbb{E}[|\mathcal{Q}|^4]$, we apply the fourth-moment formula for a normally distributed variable
\begin{align}\label{mathcal{Q}}
\mathbb{E}[|\mathcal{Q}|^4]=3\sigma_{|\mathcal{Q}|}^4+6\mu_{|\mathcal{Q}|}^2\sigma_{|\mathcal{Q}|}^2+\mu_{|\mathcal{Q}|}^4.    
\end{align}
Substituting~\eqref{mathcal{Q}} into~\eqref{Vary} yields
\begin{equation*}
\text{VAR}[Y]=2\sigma_{|\mathcal{Q}|}^4+4\mu_{|\mathcal{Q}|}^2\sigma_{|\mathcal{Q}|}^2.
\end{equation*}

Assuming that $Y\sim \text{Gamma}(\alpha,\beta)$, we have $\mathbb{E}[Y]=\alpha\beta$ and $\text{VAR}[Y]=\alpha\beta^2$.
By matching moments, the parameters $\alpha$ and $\beta$ can be calculated as
\begin{align}\label{alpha}
\alpha=\frac{(\mathbb{E}[Y])^2}{\text{VAR}[Y]}=\frac{(\mu_S^2+\sigma_S^2)^2}{2\sigma_S^4+4\mu_S^2\sigma_S^2},   
\end{align}
and 
\begin{align}\label{beta}
\beta=\frac{\text{VAR}[Y]}{\mathbb{E}[Y]}=\frac{2\sigma_S^4+4\mu_S^2\sigma_S^2}{\mu_S^2+\sigma_S^2},
\end{align}
respectively, where $\mu_S=\mu_{|\mathcal{Q}|}$ and $\sigma_S=\sigma_{|\mathcal{Q}|}$.
Finally, substituting~\eqref{alpha} and~\eqref{beta} into~\eqref{f_Y} completes the proof of Theorem 2.

\small
\bibliographystyle{IEEEtran}
\bibliography{main}



\newpage

 




\vfill

\end{document}